\newtheorem{thm}{Theorem}
\newtheorem{cor}{Corollary}
\newtheorem{lem}{Lemma} 
\date{\today}
\begin{document}
\title{Optimal measurements for quantum multiparameter estimation with general
states}
\author{Jing Yang}
\email{jyang75@ur.rochester.edu}

\affiliation{Department of Physics and Astronomy, University of Rochester, Rochester,
New York 14627, USA}
\affiliation{Center for Coherence and Quantum Optics, University of Rochester,
Rochester, New York 14627, USA}
\author{Shengshi Pang}
\email{pangshengshi@gmail.com}

\affiliation{Department of Physics and Astronomy, University of Rochester, Rochester,
New York 14627, USA}
\affiliation{Center for Coherence and Quantum Optics, University of Rochester,
Rochester, New York 14627, USA}
\affiliation{Fermilab, P. O. Box 500, Batavia, Illinois 60510, USA}
\author{Yiyu Zhou}
\email{yzhou62@ur.rochester.edu}

\affiliation{The Institute of Optics, University of Rochester, Rochester, New York
14627, USA}
\author{Andrew N. Jordan}
\email{jordan@pas.rochester.edu}

\affiliation{Department of Physics and Astronomy, University of Rochester, Rochester,
New York 14627, USA}
\affiliation{Center for Coherence and Quantum Optics, University of Rochester,
Rochester, New York 14627, USA}
\affiliation{Institute for Quantum Studies, Chapman University, 1 University Drive,
Orange, CA 92866, USA}
\begin{abstract}
We generalize the approach by Braunstein and Caves {[}Phys. Rev. Lett.
72, 3439 (1994){]} to quantum multi-parameter estimation with general
states. We derive a matrix bound of the classical Fisher information
matrix due to each measurement operator. The saturation of all these
bounds results in the saturation of the matrix Helstrom Cram\'er-Rao
bound. Remarkably, the saturation of the matrix bound is equivalent
to the saturation of the scalar bound with respect to any given positive
definite weight matrix. Necessary and sufficient conditions are obtained
for the optimal measurements that give rise to the Helstrom Cram\'er-Rao
bound associated with a general quantum state. To saturate the Helstrom
bound with separable measurements or collective measurement entangling
only a small number of identical states, we find it is necessary for
the symmetric logarithmic derivatives to commute on the support of the
state. As an important application of our results, we construct several
local optimal measurements for the problem of estimating the three-dimensional
separation of two incoherent optical point sources. 
\end{abstract}
\maketitle

\section{Introduction}

Metrology \citep{giovannetti2004quantumenhanced,wiseman2009quantum,giovannetti2011advances},
the science of precision measurements, has found wide applications
in various fields of physics and engineering, including interferometry
\citep{helm2018spinorbitcoupled,degen2017quantum,haine2016meanfield,stevenson2015sagnacinterferometry,cronin2009opticsand},
atomic clocks \citep{ludlow2015optical,derevianko2011colloquium,muessel2014scalable},
optical imaging \citep{tsang2016quantum,genovese2016realapplications,dowling2015quantum,kolobov2007quantum},
and detection of gravitational waves \citep{adhikari2014gravitational}.
In classical metrology, the covariance matrix of a maximum likelihood
estimator can always asymptotically achieve the classical Cram\'er-Rao
(CR) bound proportional to the inverse of the classical Fisher information
matrix (CFIM) \citep{cramer1946mathematical,kay1993fundamentals}.
In quantum metrology, the CR bound can be further minimized over all
possible quantum measurements to yield the its quantum generalization.
However, over the years different quantum generalizations of the CR
bound have been developed motivated by the fact that the minimum variance
for all the parameters may not be achievable simultaneously in a single
measurement. The  strongest bound in the current literature is
the Holevo CR bound \citep{holevo2011probabilistic,hayashi2005asymptotic,kahn2009localasymptotic,suzuki2016explicit}.
However, this bound involves a complicated minimization over a set
of locally unbiased operators. Other well-known bounds include the
Yuen and Lax \citep{yuen1973multipleparameter} CR bound which is
in terms of the right logarithmic derivatives and the Helstrom CR
bound \citep{helstrom1976quantum2,helstrom1968theminimum}. But the most
widely used quantum CR bound by quantum physicists perhaps is the
one proposed by Helstrom due to its simplicity and intimate connections
to the geometric structure of quantum states \citep{bengtsson2017geometry}.
Therefore we focus on the saturation of the Helstrom CR bound in this
paper. In the Helstrom CR bound, the CFIM is maximized over all positive operator-valued measure (POVM)
measurements to give the quantum Fisher information matrix (QFIM).
Throughout this paper, we assume the limit of the large sample size.
Therefore the saturation of the Helstrom CR bound, becomes the search
for optimal measurements that saturates the QFIM. Braunstein and Caves
showed \citep{braunstein_statistical_1994} that for single parameter
estimation such an optimal measurement always exists. However the
QFIM in multi-parameter estimation in general may not be achievable
by any quantum measurement even in the asymptotic sense of the large
sample size \citep{szczykulska2016multiparameter}. 

On the other hand, a general theory of quantum multi-parameter estimation
is desired in many practical scenarios, including superresolution
\citep{yu2018quantum,backlund2018fundamental,zhou2018amodern,vrehavcek2017multiparameter,Rehacek-17-OL,tham2017beating,ang2017quantum,paur2016achieving,nair2016farfield,lupo2016ultimate,yang2016farfield}
spurred by the seminal work \citep{tsang2016quantum}, Hamiltonian
estimation \citep{liu2017controlenhanced,yuan2016sequential,baumgratz2016quantum,humphreys2013quantum},
parameter estimation in interferometry \citep{roccia2017entangling,pezz`e2017optimal,ragy2016compatibility,vidrighin2014jointestimation,genoni2013optimal}.
For a pure state, the saturation of the Helstrom CR bound is now fully
understood due to the works of Matsumoto \citep{matsumoto2002anew}
and Pezz\`e et al \citep{pezz`e2017optimal}. However, for a general
mixed state, the necessary and sufficient conditions for any 
POVM measurement to saturate the Helstrom
CR bound are still uncharted. In this paper, we derive the saturation
conditions by generalizing the earlier approach developed by Braunstein
and Caves \citep{braunstein_statistical_1994} for single parameter
estimation to multi-parameter estimation. For the POVM operator corresponding
to a zero probability outcome, we find that the saturation of the
Helstrom CR bound imposes a constraint which is satisfied automatically
in the case of single parameter estimation and therefore does not
appear there. It is also found that for the existence of optimal measurements
it is necessary to have the symmetric logarithmic derivatives commute
on the support of a state. Based on the saturation conditions, we
also construct several local optimal measurements in the problem of
estimating the three-dimensional separation of two monochromatic,
incoherent point sources, which is shown in Fig. \ref{fig:Setup}.
We emphasize that the saturation conditions we find may have possible
applications not only in the superresolution of optical imaging, but
also in quantum sensing \citep{degen2017quantum}.

\begin{figure}
\begin{centering}
\includegraphics[scale=0.3]{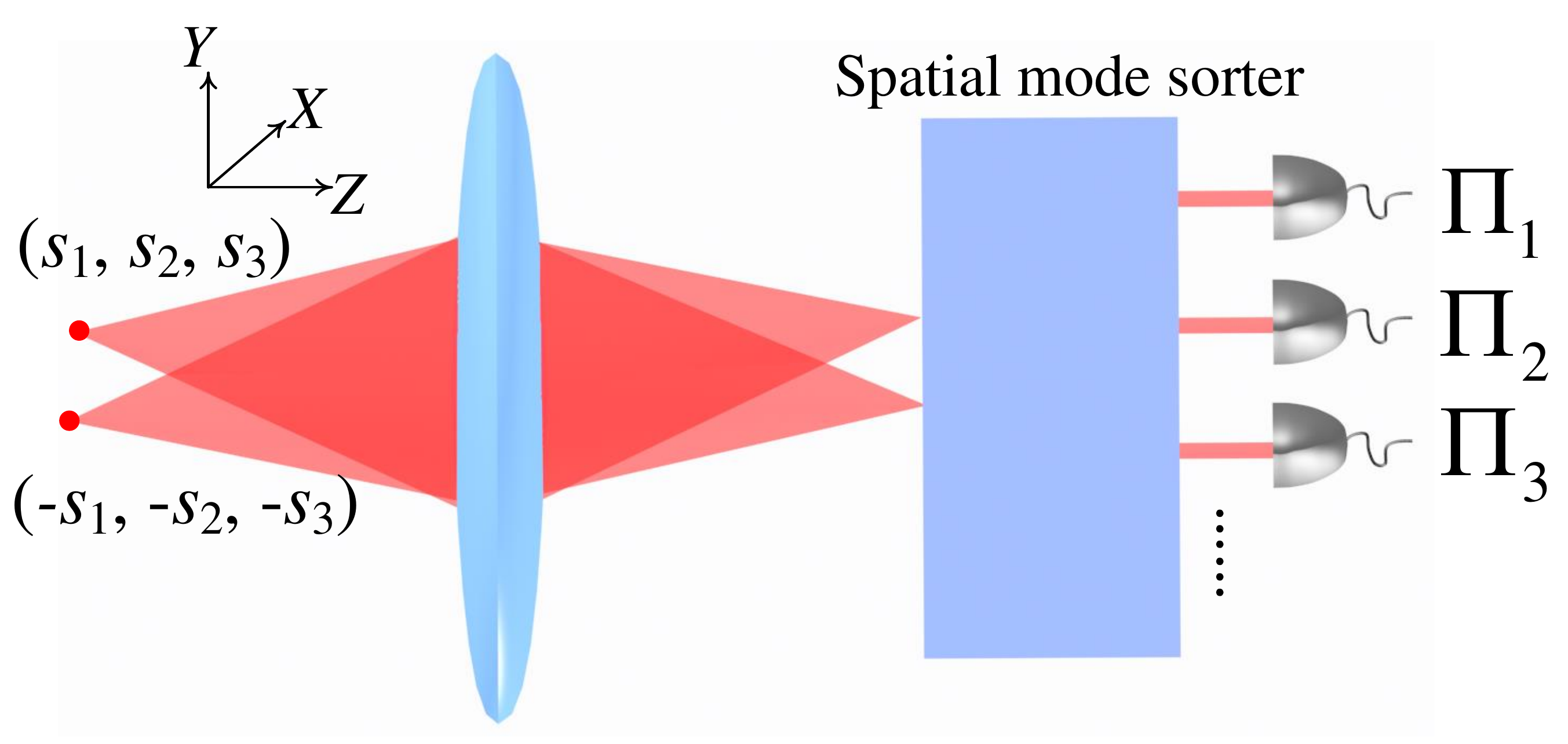}
\par\end{centering}
\caption{Schematic setup of the estimation of the three-dimensional separation
of two point light sources, whose coordinates are denoted as $\pm(s_{1},\,s_{2},\,s_{3})$.
$\Pi_{k}$ denotes a projector of a measurement, which can be implemented
by spatial mode sorters \citep{linares2017interferometric,linares2017spatial,zhou2017sorting}.
The measurement is performed at the image plane. Alternatively, the
measurement can also be performed at the pupil plane, i.e., the Fourier
transformed plane of the image plane (not shown). \label{fig:Setup}}
\end{figure}

This paper is organized as follows: In Sec.~\ref{sec:Preliminary},
we define the notations and concepts which is necessary for the subsequent
derivations. In Sec.~\ref{sec:recovering}, we recover the Helstrom
CR bound by generalizing the approach of Braunstein and Caves \citep{braunstein_statistical_1994}
and show that the saturation of the matrix Helstrom CR bound is equivalent
to the saturation of the corresponding scalar bound with respect to
any given positive definite cost matrix. In Sec.~IV, we derive the
necessary and sufficient conditions for the optimal measurements that
saturate the matrix Helstrom CR bound. In Sec.~\ref{sec:superresolution},
we apply our saturation conditions to the problem of three-dimensional
imaging of two optical incoherent point sources. We summarize our
findings and discuss several open questions in Sec.~\ref{sec:Conclusion}.

\section{Preliminary notations and definitions\label{sec:Preliminary}}

Before starting our derivations, some notations and definitions are
in order for later use: (a) A general probe state is described by
the density operator 
\begin{equation}
\rho_{\bm{\lambda}}=\sum_{n}p_{n\bm{\lambda}}\ket{\psi_{n\bm{\lambda}}}\bra{\psi_{n\bm{\lambda}}},
\end{equation}
where $p_{n\bm{\lambda}}$'s are \textit{strictly} positive and $\ket{\psi_{n\bm{\lambda}}}$'s
are orthonormal and do not vanish globally for all $\bm{\lambda}$.
We denote the \textit{kernel (null space)} of $\rho_{\bm{\lambda}}$
at some specific value $\bm{\lambda}_{0}$ as 
\begin{equation}
\text{ker}(\rho_{\bm{\lambda}_{0}})\equiv\text{span}\{\ket{\psi}:\braket{\psi_{n\bm{\lambda}_{0}}\big|\psi}=0,\,\forall n\},
\end{equation}
and the \textit{support} of $\rho_{\bm{\lambda}}$ at $\bm{\lambda}_{0}$
as 
\begin{equation}
\text{supp}(\rho_{\bm{\lambda}_{0}})\equiv\text{span}\{\ket{\psi_{n\bm{\lambda}_{0}}}\text{'s}\}.
\end{equation}
For a vector $\ket{\psi}$, its projection on $\text{ker}(\rho_{\bm{\lambda}})$
is denoted as $\ket{\psi^{0}}$ and projection on $\text{supp}(\rho_{\bm{\lambda}})$
is denoted as $\ket{\psi^{\perp}}$. According to linear algebra there
is a unique decomposition $\ket{\psi}=\ket{\psi^{0}}+\ket{\psi^{\perp}}$.
(b) We use a short hand notation $\partial_{i}$ as the derivative
with respect to the estimation parameter $\lambda_{i}$, for example
$\partial_{i}\rho_{\bm{\lambda}}\equiv\partial\rho_{\bm{\lambda}}/\partial\lambda_{i}$.
In addition the projections of $\ket{\partial_{i}\psi_{\bm{\lambda}}}$
on the kernel and support of $\rho_{\bm{\lambda}}$ are denoted as
$\ket{\partial_{i}^{0}\psi_{\bm{\lambda}}}$ and $\ket{\partial_{i}^{\perp}\psi_{\bm{\lambda}}}$
respectively, where 
\begin{equation}
\ket{\partial_{i}^{0}\psi_{\bm{\lambda}}}\equiv\ket{\partial_{i}\psi_{\bm{\lambda}}}-\ket{\partial_{i}^{\perp}\psi_{\bm{\lambda}}},
\end{equation}
 
\begin{equation}
\ket{\partial_{i}^{\perp}\psi_{\bm{\lambda}}}\equiv\sum_{n}\ket{\psi_{n\bm{\lambda}}}\braket{\psi_{n\bm{\lambda}}\big|\partial_{i}\psi_{\bm{\lambda}}}.
\end{equation}
(c) The POVM operator is denoted as $\Pi_{k}$ with spectral decomposition
\begin{equation}
\Pi_{k}\equiv\sum_{\alpha}q_{k\alpha}\ket{\pi_{k\alpha}}\bra{\pi_{k\alpha}},
\end{equation}
where $q_{k\alpha}$'s are strictly positive and $\ket{\pi_{k\alpha}}$'s
are orthonormal. If $\text{Tr}(\rho_{\bm{\lambda}}\Pi_{k})=0$ then
$\Pi_{k}$ is called a \textit{null }(POVM) operator otherwise it
is called a\textit{ regular} (POVM) operator. A basis vector $\ket{\pi_{k\alpha}}$
is \textit{null} if $\braket{\psi_{n\bm{\lambda}}\big|\pi_{k\alpha}}=0,\,\forall n$
otherwise it is \textit{regular}. We emphasis that null operator will
make the CFIM elements ill-defined and therefore some regularization
is required when calculating their contributions to the CFIM.

\section{Recovering the Helstrom CR bound\label{sec:recovering}}

The CFIM quantifies the sensitivity of a probability distribution
to a small change in $\bm{\lambda}$ \citep{bengtsson2017geometry}.
Its matrix element is defined as \citep{cramer1946mathematical,kay1993fundamentals}
\begin{equation}
F_{ij}=\sum_{k}\mathcal{F}_{ij}^{k},
\end{equation}
where 
\begin{equation}
\mathcal{F}_{ij}^{k}\equiv\partial_{i}\text{Tr}(\rho_{\bm{\lambda}}\Pi_{k})\partial_{j}\text{Tr}(\rho_{\bm{\lambda}}\Pi_{k})/\text{Tr}(\rho_{\bm{\lambda}}\Pi_{k}).
\end{equation}
Note that null operators contribute to the CFIM elements terms of
the type $0/0$, which should be understood in the sense of the multivariate
limit. Due to this observation, it is natural to discuss the CFIM
element separately for null and regular operators. For both regular
and null operators, we prove the following inequality in subsecs \ref{subsec:regular}
and \ref{subsec:null}
\begin{equation}
\sum_{ij}u_{i}\mathcal{F}_{ij}^{k}u_{j}\leq\sum_{ij}u_{i}u_{j}\mathcal{I}_{ij}^{k},\label{eq:uFku}
\end{equation}
where $\bm{u}$ is an arbitrary, real, and nonzero vector \footnote{For a complex vector ${z_{i}=u_{i}+\text{i}v_{i}}$ and a real and
symmetric matrix ${A_{ij}}$, ${\sum_{ij}z_{i}^{*}A_{ij}z_{j}=\sum_{ij}u_{i}A_{ij}u_{j}+\sum_{ij}v_{i}A_{ij}v_{j}}$.
Thus we see that $\sum_{ij}z_{i}^{*}A_{ij}z_{j}>0$ for any real and
nonzero $z_{i}$ is equivalent as ${\sum_{ij}u_{i}A_{ij}u_{j}>0}$
for any real and nonzero vector $u_{i}$. To show the matrix ${\mathcal{I}_{ij}^{k}-\mathcal{F}_{ij}^{k}}$
is positive definite, it is sufficient to consider real and nonzero
vectors in Eq.~(\ref{eq:uFku}).}, $L_{i\bm{\lambda}}$ is the Symmetric Logarithmic Derivative (SLD)
with respect to parameter $\lambda_{i}$ defined as \citep{paris_quantum_2009,holevo2011probabilistic,helstrom1976quantum2}
\begin{equation}
[L_{i\bm{\lambda}}\rho_{\bm{\lambda}}+\rho_{\bm{\lambda}}L_{i\bm{\lambda}}]/2=\partial_{i}\rho_{\bm{\lambda}},\label{eq:SLD}
\end{equation}
and 
\begin{equation}
\mathcal{I}_{ij}^{k}\equiv\text{Re}[\text{Tr}(\rho_{\bm{\lambda}}L_{i\bm{\lambda}}\Pi_{k}L_{j\bm{\lambda}})]
\end{equation}
is the QFIM element corresponding to a regular or null POVM operator
$\Pi_{k}$. Note that Eq.~(\ref{eq:uFku}) for null operators is
not discussed in Ref.~\citep{braunstein_statistical_1994} since
it is automatically saturated in the case of single parameter estimation,
as we will see in the next section. It is readily checked that summation
over $k$ in Eq.~(\ref{eq:uFku}) yields the Helstrom CR bound \citep{helstrom1976quantum2,holevo2011probabilistic,paris_quantum_2009}
\begin{equation}
\sum_{ij}u_{i}F_{ij}u_{j}\leq\sum_{ij}u_{i}I_{ij}u_{j},\label{eq:HelstromCR}
\end{equation}
where
\begin{equation}
I_{ij}\equiv\text{Re}[\text{Tr}(\rho_{\bm{\lambda}}L_{i\bm{\lambda}}L_{j\bm{\lambda}})].
\end{equation}

\subsection{\label{subsec:regular}Proof of Eq. (\ref{eq:uFku}) for regular
POVM operators}

\begin{proof}For regular POVM operators, we prove Eq.~(\ref{eq:uFku})
by generalizing the technique by Braunstein and Caves~\citep{braunstein_statistical_1994}.
The CFIM element corresponding
to a regular POVM $\Pi_{k}$ is defined as 
\begin{equation}
\mathcal{F}_{ij}^{k}(\bm{\lambda})=\partial_{i}\text{Tr}(\rho_{\bm{\lambda}}\Pi_{k})\partial_{j}\text{Tr}(\rho_{\bm{\lambda}}\Pi_{k})/\text{Tr}(\rho_{\bm{\lambda}}\Pi_{k}).
\end{equation}
Using Eq. (\ref{eq:SLD}) and the cyclic property of trace, i.e.,
\begin{equation}
\text{Tr}(L_{i\bm{\lambda}}\rho_{\boldsymbol{\lambda}}\Pi_{k})=\text{Tr}(\rho_{\bm{\lambda}}\Pi_{k}L_{i\bm{\lambda}})=[\text{Tr}(\rho_{\boldsymbol{\lambda}}L_{i\bm{\lambda}}\Pi_{k})]^{*},
\end{equation}
we obtain \citep{braunstein_statistical_1994,paris_quantum_2009}
\begin{align}
\partial_{i}\text{Tr}(\rho_{\boldsymbol{\lambda}}\Pi_{k}) & =\text{Tr}(\partial_{i}\rho_{\boldsymbol{\lambda}}\Pi_{k})\nonumber \\
 & =\frac{1}{2}[\text{Tr}(L_{i\bm{\lambda}}\rho_{\boldsymbol{\lambda}}\Pi_{k})+\text{Tr}(\rho_{\boldsymbol{\lambda}}L_{i\bm{\lambda}}\Pi_{k})]\nonumber \\
 & =\text{Re}[\text{Tr}(\rho_{\bm{\lambda}}\Pi_{k}L_{i\bm{\lambda}})].
\end{align}
Therefore for a real and nonzero vector $\bm{u}$, we obtain

\begin{align}
\sum_{ij}u_{i}\mathcal{F}_{ij}^{k}u_{j} & =\frac{[\text{Re}\text{Tr}(\rho_{\bm{\lambda}}\Pi_{k}\sum_{i}u_{i}L_{i\bm{\lambda}})]^{2}}{\text{Tr}(\rho_{\bm{\lambda}}\Pi_{k})}\nonumber \\
 & \leq\frac{\big|\text{Tr}(\rho_{\bm{\lambda}}\Pi_{k}\sum_{i}u_{i}L_{i\bm{\lambda}})\big|^{2}}{\text{Tr}(\rho_{\bm{\lambda}}\Pi_{k})}\nonumber \\
 & \leq\sum_{ij}u_{i}u_{j}\text{Tr}(\rho_{\bm{\lambda}}L_{i\bm{\lambda}}\Pi_{k}L_{j\bm{\lambda}})\nonumber \\
 & =\frac{1}{2}\sum_{ij}u_{i}u_{j}\left[\text{Tr}(\rho_{\bm{\lambda}}L_{i\bm{\lambda}}\Pi_{k}L_{j\bm{\lambda}})\right.\nonumber \\
 & \left.+\text{Tr}(\rho_{\bm{\lambda}}L_{j\bm{\lambda}}\Pi_{k}L_{i\bm{\lambda}})\right]\nonumber \\
 & =\sum_{ij}u_{i}u_{j}\text{Re}[\text{Tr}(\rho_{\bm{\lambda}}L_{i\bm{\lambda}}\Pi_{k}L_{j\bm{\lambda}})],\label{eq:bFrb2}
\end{align}
we have in the second inequality applied the Cauchy-Swartz inequality
$\big|\text{Tr}(A^{\dagger}B)\big|^{2}\leq\text{Tr}(A^{\dagger}A)\text{Tr}(B^{\dagger}B)$,
with $A\equiv\sqrt{\Pi_{k}}\sqrt{\rho_{\bm{\lambda}}}$ and $B\equiv\sum_{i}\sqrt{\Pi_{k}}u_{i}L_{i\bm{\lambda}}\sqrt{\rho_{\bm{\lambda}}}$.
Due to the fact $u_{i}u_{j}$ is symmetric in indices $i,\,j$, we
have symmetrized $\text{Tr}(\rho_{\bm{\lambda}}L_{i\bm{\lambda}}\Pi_{k}L_{j\bm{\lambda}})$
in the second last equality to obtain the last equality. \end{proof}

\subsection{\label{subsec:null}Proof of Eq. (\ref{eq:uFku}) for null POVM operators}

Before we start the proof, let us first prove an observation that
will be useful later.

\begin{lem}\label{lem:kerPi}A measurement operator $\Pi_{k}$ is
null if and only if $\forall n$, $\ket{\psi_{n\bm{\lambda}}}$ lies
in the kernel of $\Pi_{k}$, i.e.,
\begin{equation}
\Pi_{k}\ket{\psi_{n\bm{\lambda}}}=0\label{eq:PikPsi}
\end{equation}

\end{lem}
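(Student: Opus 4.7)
The plan is to prove both directions of the biconditional by directly expanding the trace $\text{Tr}(\rho_{\bm{\lambda}}\Pi_k)$ using the given spectral decompositions of $\rho_{\bm{\lambda}}$ and $\Pi_k$. The crucial input that makes the argument trivial is the strict positivity of the eigenvalues $p_{n\bm{\lambda}}$ and $q_{k\alpha}$, which is built into the definitions in Sec.~\ref{sec:Preliminary}.

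For the ``only if'' direction, I would start from the assumption $\text{Tr}(\rho_{\bm{\lambda}}\Pi_k)=0$ and expand
\begin{equation}
\text{Tr}(\rho_{\bm{\lambda}}\Pi_{k})=\sum_{n,\alpha}p_{n\bm{\lambda}}q_{k\alpha}\bigl|\braket{\psi_{n\bm{\lambda}}|\pi_{k\alpha}}\bigr|^{2}.
\end{equation}
Since every term in the sum is nonnegative and the coefficients $p_{n\bm{\lambda}}q_{k\alpha}$ are strictly positive, each overlap $\braket{\psi_{n\bm{\lambda}}|\pi_{k\alpha}}$ must vanish. Applying $\Pi_k$ to $\ket{\psi_{n\bm{\lambda}}}$ via its spectral expansion then gives $\Pi_k\ket{\psi_{n\bm{\lambda}}} = \sum_\alpha q_{k\alpha}\ket{\pi_{k\alpha}}\braket{\pi_{k\alpha}|\psi_{n\bm{\lambda}}}=0$, which is Eq.~(\ref{eq:PikPsi}).

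For the ``if'' direction, assume $\Pi_k\ket{\psi_{n\bm{\lambda}}}=0$ for all $n$. Then
\begin{equation}
\text{Tr}(\rho_{\bm{\lambda}}\Pi_{k})=\sum_{n}p_{n\bm{\lambda}}\bra{\psi_{n\bm{\lambda}}}\Pi_{k}\ket{\psi_{n\bm{\lambda}}}=0,
\end{equation}
so $\Pi_k$ is null by definition.

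There is no substantive obstacle here; the lemma is essentially a bookkeeping statement that relies on the positive-sum-of-nonnegative-terms argument. The only subtlety worth flagging is that the strict positivity of the $p_{n\bm{\lambda}}$ in the spectral decomposition of $\rho_{\bm{\lambda}}$ is essential---without it, one could only conclude that $\Pi_k$ annihilates the support of $\rho_{\bm{\lambda}}$, which is what we want, but the statement as written refers to the specific eigenvectors $\ket{\psi_{n\bm{\lambda}}}$ that appear with nonzero weight, and these indeed span $\text{supp}(\rho_{\bm{\lambda}})$.
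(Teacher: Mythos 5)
Your proof is correct and follows essentially the same route as the paper's: both reduce $\text{Tr}(\rho_{\bm{\lambda}}\Pi_{k})=0$ to a sum of nonnegative terms with strictly positive weights and invoke the positive-definiteness of $\Pi_{k}$ on its support (which you make explicit via the strictly positive $q_{k\alpha}$ in the spectral decomposition, where the paper phrases it as decomposing $\ket{\psi_{n\bm{\lambda}}}$ into support and kernel components of $\Pi_{k}$). No gap.
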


\begin{proof}The lemma is a consequence of the semi-positive definiteness
of $\rho_{\bm{\lambda}}$ and $\Pi_{k}$. To see this, let us note
a null measurement operator satisfies, 
\begin{equation}
\text{Tr}(\rho_{\bm{\lambda}}\Pi_{k})=\text{\ensuremath{\sum_{n}p_{n\bm{\lambda}}\braket{\psi_{n\bm{\lambda}}\big|\Pi_{k}\big|\psi_{n\bm{\lambda}}}=0}}\label{eq:TrrhoPi}
\end{equation}
Since $\rho_{\bm{\lambda}}$ is positive-definite on its support,
we conclude $p_{n\bm{\lambda}}$'s are strictly positive, as we defined
in Sec. \ref{sec:Preliminary}. Therefore Eq. (\ref{eq:TrrhoPi})
becomes
\begin{equation}
\braket{\psi_{n\bm{\lambda}}\big|\Pi_{k}\big|\psi_{n\bm{\lambda}}}=0\label{eq:PsiPikPsi}
\end{equation}
On the other hand, we may decompose $\ket{\psi_{n\bm{\lambda}}}$
into components that lie in the support and kernel of $\Pi_{k}$.
Then, since $\Pi_{k}$ is positive definite on its support, Eq. (\ref{eq:PsiPikPsi})
is equivalent as the fact that $\ket{\psi_{n\bm{\lambda}}}$ completely
lies in the kernel of $\Pi_{k}$.

\end{proof}

Now let us prove Eq. (\ref{eq:uFku}) for the case of null measurement
operators.

\begin{proof}Introducing short hand notation
\begin{equation}
g_{ij}^{k}(\bm{\lambda}^{\prime})\equiv\text{Tr}(\partial_{i}\rho_{\bm{\lambda}^{\prime}}\Pi_{k})\text{Tr}(\partial_{j}\rho_{\bm{\lambda}^{\prime}}\Pi_{k}),
\end{equation}
\begin{equation}
h^{k}(\bm{\lambda}^{\prime})\equiv\text{Tr}(\rho_{\bm{\lambda}^{\prime}}\Pi_{k}),
\end{equation}
the CFIM element $\mathcal{F}_{ij}^{k}$ corresponding to a null projector
$\Pi_{k}$ defined in the main text can be rewritten as 
\begin{equation}
\mathcal{F}_{ij}^{k}(\bm{\lambda})\equiv\lim_{\bm{\lambda}^{\prime}\to\bm{\lambda}}\frac{g_{ij}^{k}(\bm{\lambda}^{\prime})}{h^{k}(\bm{\lambda}^{\prime})}.\label{eq:Fijk-redef}
\end{equation}
Since the right hand side of Eq.~(\ref{eq:Fijk-redef}) is of the
type $0/0$, we need to Taylor expand both the numerator and denominator,
which will involve the derivatives of $\rho_{\bm{\lambda}}$. It is
straightforward to show that the first order derivatives of $g_{ij}^{k}(\bm{\lambda}^{\prime})$
and $h^{k}(\bm{\lambda}^{\prime})$ vanish at $\bm{\lambda}$, i.e.,
\begin{align}
\partial_{p}g_{ij}^{k}(\bm{\lambda}) & =\text{Tr}(\partial_{p}\partial_{i}\rho_{\bm{\lambda}}\Pi_{k})\cancel{\text{Tr}(\partial_{j}\rho_{\bm{\lambda}}\Pi_{k})}\\
 & +\cancel{\text{Tr}(\partial_{i}\rho_{\bm{\lambda}}\Pi_{k})}\text{Tr}(\partial_{p}\partial_{j}\rho_{\bm{\lambda}}\Pi_{k})=0,
\end{align}
\begin{equation}
\partial_{p}h^{k}(\bm{\lambda})=\text{Tr}(\partial_{p}\rho_{\bm{\lambda}}\Pi_{k})=0,
\end{equation}
due to Lemma \ref{lem:kerPi}. Therefore we need to expand $g_{ij}^{k}(\bm{\lambda}^{\prime})$
and $h^{k}(\bm{\lambda}^{\prime})$ to the second order in $\delta\bm{\lambda}\equiv\bm{\lambda}^{\prime}-\bm{\lambda}$
and calculate their second derivatives at $\bm{\lambda}$, i.e.,
\begin{align}
\partial_{p}\partial_{q}g_{ij}^{k}(\bm{\lambda}) & =\text{Tr}(\partial_{p}\partial_{i}\rho_{\bm{\lambda}}\Pi_{k})\text{Tr}(\partial_{q}\partial_{j}\rho_{\bm{\lambda}}\Pi_{k})\nonumber \\
 & +\text{Tr}(\partial_{q}\partial_{i}\rho_{\bm{\lambda}}\Pi_{k})\text{Tr}(\partial_{p}\partial_{j}\rho_{\bm{\lambda}}\Pi_{k}),\label{eq:dmdngijk}
\end{align}
\begin{equation}
\partial_{p}\partial_{q}h^{k}(\bm{\lambda})=\text{Tr}(\partial_{p}\partial_{q}\rho_{\bm{\lambda}}\Pi_{k}).\label{eq:dmdnhk}
\end{equation}
If we define
\begin{equation}
T_{ij}^{k}\equiv\frac{1}{2}\text{Tr}(\partial_{i}\partial_{j}\rho_{\bm{\lambda}}\Pi_{k}),\label{eq:Tijk-def}
\end{equation}
Substitution of Eq.~(\ref{eq:Tijk-def}) into Eqs.~(\ref{eq:dmdngijk},
\ref{eq:dmdnhk}) gives
\begin{equation}
\partial_{p}\partial_{q}g_{ij}^{k}(\bm{\lambda})=4(T_{pi}^{k}T_{qj}^{k}+T_{qi}^{k}T_{pj}^{k}),\label{eq:dpdqgijk}
\end{equation}
\begin{equation}
\partial_{p}\partial_{q}h^{k}(\bm{\lambda}^{\prime})=2T_{pq}^{k}.\label{eq:dpdqhk}
\end{equation}
Substituting $g_{ij}^{k}(\bm{\lambda}^{\prime})=\sum_{p,\,q}\partial_{p}\partial_{q}g_{ij}^{k}(\bm{\lambda})\delta\lambda_{p}\delta\lambda_{q}$
and $h^{k}(\bm{\lambda}^{\prime})=\sum_{p,\,q}\partial_{p}\partial_{q}h^{k}(\bm{\lambda})\delta\lambda_{p}\delta\lambda_{q}$
into Eq.~(\ref{eq:Fijk-redef}), with notice of Eqs.~(\ref{eq:dpdqgijk},
\ref{eq:dpdqhk}), we arrive at
\begin{equation}
\mathcal{F}_{ij}^{k}(\bm{\lambda})=\frac{2\sum_{pq}(T_{pi}^{k}T_{qj}^{k}+T_{qi}^{k}T_{pj}^{k})\delta\lambda_{p}\delta\lambda_{q}}{\sum_{pq}T_{pq}^{k}\delta\lambda_{p}\delta\lambda_{q}}.\label{eq:Fijk-null}
\end{equation}
According to Eq.~(\ref{eq:SLD}) we find\begin{widetext}

\begin{align}
\text{Re}\text{Tr}(\rho_{\bm{\lambda}}L_{i\bm{\lambda}}\Pi_{k}L_{j\bm{\lambda}}) & =\frac{1}{2}[\text{Tr}(\rho_{\bm{\lambda}}L_{i\bm{\lambda}}\Pi_{k}L_{j\bm{\lambda}})+\text{c.c.}]=\frac{1}{2}[\text{Tr}(L_{j\bm{\lambda}}\rho_{\bm{\lambda}}L_{i\bm{\lambda}}\Pi_{k})+\text{Tr}(L_{i\bm{\lambda}}\rho_{\bm{\lambda}}L_{j\bm{\lambda}}\Pi_{k})]\nonumber \\
 & =\text{Tr}(L_{j\bm{\lambda}}\partial_{i}\rho_{\bm{\lambda}}\Pi_{k})+\text{Tr}(\partial_{i}\rho_{\bm{\lambda}}L_{j\bm{\lambda}}\Pi_{k})-\frac{1}{2}\cancel{\text{Tr}(L_{j\bm{\lambda}}L_{i\bm{\lambda}}\rho_{\bm{\lambda}}\Pi_{k})}-\frac{1}{2}\cancel{\text{Tr}(L_{i\bm{\lambda}}L_{j\bm{\lambda}}\rho_{\bm{\lambda}}\Pi_{k})}\nonumber \\
 & =\text{Tr}(\partial_{i}(L_{j\bm{\lambda}}\rho_{\bm{\lambda}})\Pi_{k})+\text{Tr}(\partial_{i}(\rho_{\bm{\lambda}}L_{j\bm{\lambda}})\Pi_{k})-\cancel{\text{Tr}(\partial_{i}L_{j\bm{\lambda}}\rho_{\bm{\lambda}}\Pi_{k})}-\cancel{\text{Tr}(\rho_{\bm{\lambda}}\partial_{i}L_{j\bm{\lambda}}\Pi_{k})}\nonumber \\
 & =\text{Tr}(\partial_{i}(L_{j\bm{\lambda}}\rho_{\bm{\lambda}}+\rho_{\bm{\lambda}}L_{j\bm{\lambda}})\Pi_{k})=2\text{Tr}(\partial_{i}\partial_{j}\rho_{\bm{\lambda}}\Pi_{k}),
\end{align}
where the cancellations of the terms are due to Lemma \ref{lem:kerPi}.
In view of Eq.~(\ref{eq:Tijk-def}), we arrive at 
\begin{equation}
T_{ij}^{k}=\frac{1}{4}\text{Re}\text{Tr}(\rho_{\bm{\lambda}}L_{i\bm{\lambda}}\Pi_{k}L_{j\bm{\lambda}})=\frac{1}{4}\mathcal{I}_{ij}^{k}.\label{eq:Tijk-reexpr}
\end{equation}
We first derive the following inequality for later use. For real and
non-zero $\delta\bm{\lambda}$ and $\bm{u}$, we obtain

\begin{align}
(\sum_{ij}\delta\lambda_{i}T_{ij}^{k}u_{j})^{2} & =\frac{1}{16}(\text{Re}[\sum_{ij}\delta\lambda_{i}\text{Tr}(\rho_{\bm{\lambda}}L_{i\bm{\lambda}}\Pi_{k}L_{j\bm{\lambda}})u_{j}])^{2}\nonumber \\
 & \le\frac{1}{16}\big|\sum_{ij}\delta\lambda_{i}\text{Tr}(\rho_{\bm{\lambda}}L_{i\bm{\lambda}}\Pi_{k}L_{j\bm{\lambda}})u_{j}\big|^{2}\nonumber \\
 & =\frac{1}{16}\big|\text{Tr}(\sum_{i}\delta\lambda_{i}\sqrt{\rho_{\bm{\lambda}}}L_{i\bm{\lambda}}\sqrt{\Pi_{k}}\sum_{j}u_{j}\sqrt{\Pi_{k}}L_{j\bm{\lambda}}\sqrt{\rho_{\bm{\lambda}}})\big|^{2}\nonumber \\
 & \le\frac{1}{16}\left[\sum_{ij}\delta\lambda_{i}\delta\lambda_{j}\text{Tr}(\sqrt{\rho_{\bm{\lambda}}}L_{i\bm{\lambda}}\sqrt{\Pi_{k}}\sqrt{\Pi_{k}}L_{j\bm{\lambda}}\sqrt{\rho_{\bm{\lambda}}})\right]\left[\sum_{ij}u_{i}u_{j}\text{Tr}(\sqrt{\rho_{\bm{\lambda}}}L_{i\lambda}\sqrt{\Pi_{k}}\sqrt{\Pi_{k}}L_{j\bm{\lambda}}\sqrt{\rho_{\bm{\lambda}}})\right]\nonumber \\
 & =\left(\sum_{ij}T_{ij}^{k}\delta\lambda_{i}\delta\lambda_{j}\right)\left(\sum_{ij}T_{ij}^{k}u_{i}u_{j}\right)\label{eq:del-lam-Tij-uj}
\end{align}
\end{widetext}where we have used the Cauchy-Swartz inequality $\big|\text{Tr}(A^{\dagger}B)\big|^{2}\leq\text{Tr}(A^{\dagger}A)\text{Tr}(B^{\dagger}B)$,
with $A\equiv\sum_{i}\delta\lambda_{i}\sqrt{\Pi_{k}}L_{i\bm{\lambda}}\sqrt{\rho_{\bm{\lambda}}}$
and $B\equiv\sum_{i}\sqrt{\Pi_{k}}u_{i}L_{i\bm{\lambda}}\sqrt{\rho_{\bm{\lambda}}}$
in the second inequality and performed symmetrization to obtain the
last equality. Note that the denominator of Eq.~(\ref{eq:Fijk-null})
can be rewritten as, upon substitution of Eq.~(\ref{eq:Tijk-reexpr}),
\begin{align}
\sum_{pq}T_{pq}^{k}\delta\lambda_{p}\delta\lambda_{q} & =\text{Re}\text{Tr}[\rho_{\bm{\lambda}}\delta\Lambda\Pi_{k}\delta\Lambda]\nonumber \\
 & =\text{Tr}[\sqrt{\Pi_{k}}\delta\Lambda\rho_{\bm{\lambda}}(\sqrt{\Pi_{k}}\delta\Lambda)^{\dagger}]\ge0,
\end{align}
where $\delta\Lambda\equiv\sum_{p}L_{p\bm{\lambda}}\delta\lambda_{p}$.
Therefore for any $\delta\bm{\lambda}$ the denominator of Eq.~(\ref{eq:Fijk-null})
is non-negative. With these observations, next we find 
\begin{alignat}{1}
\sum_{ij}u_{i}\mathcal{F}_{ij}^{k}u_{j} & =\frac{4\sum_{ij}u_{i}u_{j}\sum_{pq}T_{pi}^{k}T_{qj}^{k}\delta\lambda_{p}\delta\lambda_{q}}{\sum_{pq}T_{pq}^{k}\delta\lambda_{p}\delta\lambda_{q}}\nonumber \\
 & =\frac{4(\sum_{pi}\delta\lambda_{p}T_{pi}^{k}u_{i})^{2}}{\sum_{pq}T_{pq}^{k}\delta\lambda_{p}\delta\lambda_{q}}\nonumber \\
 & \leq\frac{4(\sum_{pq}T_{pq}^{k}\delta\lambda_{p}\delta\lambda_{q})\sum_{ij}u_{i}u_{j}T_{ij}^{k}}{\sum_{pq}T_{pq}^{k}\delta\lambda_{p}\delta\lambda_{q}}\nonumber \\
 & =\sum_{ij}u_{i}u_{j}\mathcal{I}_{ij}^{k},\label{eq:bF0b2}
\end{alignat}
where we have used Eq.~(\ref{eq:del-lam-Tij-uj}) in the inequality
to get the upper bound.\end{proof}

\subsection{Equivalence between the saturations of the scalar and matrix Helstrom
CR bounds}

Remarkably, based on the following Lemma, the saturation of the matrix
bound (\ref{eq:HelstromCR}) is equivalent as the saturation of the
corresponding scalar bound with respect to any given positive definite
weight matrix, which has been used in previous papers \citep{vidrighin2014jointestimation,ragy2016compatibility}.

\begin{lem}\label{lem:matrix-scalar}Given two positive semi-definite
matrices $A$, $B$ and a positive definite weight matrix $G$ called
weight matrix, if $A\curlyeqprec B$, i.e., $B-A$ is positive semi-definite,
then the following statements are equivalent

(i) $\exists$ a positive definite weight matrix $G_{0}$ such that
$\text{Tr}(G_{0}A)=\text{Tr}(G_{0}B)$

(ii) $A=B$ 

(iii) $\forall$positive definite weight matrix $G$, $\text{Tr}(GA)=\text{Tr}(GB)$
holds

(iv) $\exists$ a positive definite weight matrix $G_{0}$ such that
$\text{Tr}(G_{0}A^{-1})=\text{Tr}(G_{0}B^{-1})$

(v) $\forall$positive definite matrix $G$, $\text{Tr}(GA^{-1})=\text{Tr}(GB^{-1})$
holds\end{lem}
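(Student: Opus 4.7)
The plan is to prove the five-way equivalence through two short cycles that meet at statement (ii), relying on one scalar tool and one order-reversal fact. The central tool is the observation that for any positive semi-definite matrix $M$ and any positive definite $G$, one has $\mathrm{Tr}(GM)=\mathrm{Tr}(G^{1/2}MG^{1/2})\geq 0$ with equality if and only if $G^{1/2}MG^{1/2}=0$, and since $G^{1/2}$ is invertible this forces $M=0$. The order-reversal fact is that whenever $A,B$ are positive definite with $A\preceq B$, one has $B^{-1}\preceq A^{-1}$, so $A^{-1}-B^{-1}$ is again positive semi-definite.

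For the first cycle I would show (ii)$\Rightarrow$(iii)$\Rightarrow$(i)$\Rightarrow$(ii). The first two implications are immediate: if $A=B$ then $\mathrm{Tr}(GA)=\mathrm{Tr}(GB)$ for every $G$, and the universal statement trivially implies the existential one. The nontrivial direction is (i)$\Rightarrow$(ii). Here I apply the scalar tool to $M:=B-A\succeq 0$ with the particular $G_0$: the hypothesis $\mathrm{Tr}(G_0 A)=\mathrm{Tr}(G_0 B)$ gives $\mathrm{Tr}(G_0(B-A))=0$, and the tool forces $B-A=0$, i.e.\ $A=B$.

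For the second cycle I would show (ii)$\Rightarrow$(v)$\Rightarrow$(iv)$\Rightarrow$(ii). Again the first two are trivial since $A=B$ gives $A^{-1}=B^{-1}$, and the universal statement implies the existential one. For (iv)$\Rightarrow$(ii), I first use the order-reversal fact to conclude that $A^{-1}-B^{-1}\succeq 0$, so the scalar tool is applicable to $M:=A^{-1}-B^{-1}$. The hypothesis $\mathrm{Tr}(G_0 A^{-1})=\mathrm{Tr}(G_0 B^{-1})$ then forces $A^{-1}=B^{-1}$, hence $A=B$.

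The main obstacle is mild: one must note that statements (iv) and (v) implicitly presuppose invertibility of $A$ and $B$, and one needs the order-reversal lemma $A\preceq B\Rightarrow B^{-1}\preceq A^{-1}$ for positive definite matrices. The latter is standard (e.g.\ conjugate $A\preceq B$ by $A^{-1/2}$ to reduce to the scalar fact that $x\mapsto 1/x$ is operator monotone decreasing on positive operators, or equivalently use $I\preceq A^{-1/2}BA^{-1/2}$ which implies $A^{1/2}B^{-1}A^{1/2}\preceq I$, i.e.\ $B^{-1}\preceq A^{-1}$). Once that is recorded, the entire lemma follows from the single scalar positivity fact applied twice.
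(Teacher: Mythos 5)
Your proof is correct, and its core step coincides with the paper's: both reduce everything to the fact that $\mathrm{Tr}(G_0 M)=0$ with $G_0\succ 0$ and $M\succeq 0$ forces $M=0$. The paper proves this by diagonalizing $M=B-A$ and noting that the diagonal entries of $G_0$ in that basis are strictly positive while the eigenvalues of $M$ are nonnegative; you prove it by writing $\mathrm{Tr}(G_0M)=\mathrm{Tr}(G_0^{1/2}MG_0^{1/2})$ and using invertibility of $G_0^{1/2}$ — these are interchangeable. Where you genuinely add something is in the statements involving inverses: the paper asserts that once $\text{(i)}\Leftrightarrow\text{(ii)}$ is established the remaining equivalences are "straightforward," but the implication $\text{(iv)}\Rightarrow\text{(ii)}$ is not a formal consequence of $\text{(i)}\Rightarrow\text{(ii)}$; it requires knowing that $A\preceq B$ implies $A^{-1}-B^{-1}\succeq 0$, i.e.\ the operator anti-monotonicity of the inverse, so that the scalar tool applies to $M=A^{-1}-B^{-1}$. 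You supply exactly this lemma (with a correct proof by conjugation), and you also rightly flag that (iv) and (v) implicitly assume $A$ and $B$ are invertible, which the lemma's hypotheses (mere positive semi-definiteness) do not guarantee. Your write-up is therefore a more complete version of the same argument rather than a different one.
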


\begin{proof}We notice that once the $\text{(i)}\Longleftrightarrow\text{(ii)}$
is justified, proof of the equivalence between any statements becomes
straightforward. To prove $\text{(i)}\Longleftrightarrow\text{(ii)}$,
it is sufficient to show that $\text{(i)}\Longrightarrow\text{(ii)}$
as the opposite direction is trivial. Condition (i) implies that $\text{Tr}[G_{0}(B-A)]=0$.
Now moving to the basis that diagonalizes the matrix $B-A$, i.e.,
$B-A=U^{\dagger}DU$, where $D$ is diagonal, we find 
\begin{equation}
\text{Tr}[G_{0}(B-A)]=\text{Tr}(U^{\dagger}G_{0}UD)=\text{\ensuremath{\sum}}_{n}\tilde{G}_{0nn}D_{nn}=0\label{eq:TrGB-A}
\end{equation}
where $\tilde{G}_{0}\equiv U^{\dagger}G_{0}U$ is representation of
$G_{0}$ in the basis that diagonalizes $B-A$. Since $A\preceq B$,
$B-A$ is also semi-positive definite and therefore $D_{nn}\ge0$.
On the other hand $G_{0}$ is positive definite, which indicates diagonal
matrix element of its representation in every basis is positive. Then
we know $\tilde{G}_{0nn}>0$. The only way to satisfy Eq. (\ref{eq:TrGB-A})
is $D_{nn}=0$ for all $n$. Therefore we conclude that $A=B$.\end{proof}

In Lemma \ref{lem:matrix-scalar}, if we take $A$ as the CFIM and
$B$ as the QFIM, we immediately conclude that the saturation of the
matrix Helstrom bound is equivalent as the scalar bound with respect
to any given positive definite weight matrix.

\section{\textup{Saturation of the Helstrom CR bound and the partial commutativity
condition}}

The physical implications of $\mathcal{F}_{ij}^{k}$ and $\mathcal{I}_{ij}^{k}$
are very important in understanding the saturation of the Helstrom
CR bound: from Eq.~(\ref{eq:uFku}), we see that for each POVM operator
$\Pi_{k}$, either regular or null, the corresponding QFIM $\mathcal{I}^{k}$
is a matrix bound for the corresponding CFIM $\mathcal{F}^{k}$. The
saturation of the Helstrom CR bound requires the saturation of all
these matrix bounds. Following this idea, we can\textit{ }derive the
saturation conditions for the Helstrom CR bound by saturating Eq.~(\ref{eq:uFku})
for regular and null POVM operators respectively.

\subsection{Saturation conditions for general POVMs}

\begin{thm} \label{thm:reg}The matrix bound of the CFIM due to a
\textit{regular} operator $\Pi_{k}$ is saturated at $\bm{\lambda}$,
if and only if
\begin{equation}
\Pi_{k}L_{i\bm{\lambda}}\ket{\psi_{n\bm{\lambda}}}=\xi_{i}^{k}\Pi_{k}\ket{\psi_{n\bm{\lambda}}},\,\forall i,\,k,\,n\label{eq:sat-cond-reg}
\end{equation}
where $\xi_{i}^{k}$ is real and independent of $n$.\end{thm}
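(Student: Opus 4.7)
The plan is to trace the two inequalities that appear in the derivation of Eq.~(\ref{eq:uFku}) for regular POVMs and extract the conditions under which each becomes an equality. Writing $z\equiv\text{Tr}(\rho_{\bm{\lambda}}\Pi_{k}\sum_{i}u_{i}L_{i\bm{\lambda}})$, the derivation in subsec.~\ref{subsec:regular} used (I) the elementary estimate $(\text{Re}\,z)^{2}\le|z|^{2}$ and (II) the Cauchy--Schwarz inequality $|\text{Tr}(A^{\dagger}B)|^{2}\le\text{Tr}(A^{\dagger}A)\text{Tr}(B^{\dagger}B)$ applied to $A=\sqrt{\Pi_{k}}\sqrt{\rho_{\bm{\lambda}}}$ and $B=\sum_{i}u_{i}\sqrt{\Pi_{k}}L_{i\bm{\lambda}}\sqrt{\rho_{\bm{\lambda}}}$. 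By Lemma~\ref{lem:matrix-scalar}, matrix saturation of the bound at $\bm{\lambda}$ is equivalent to saturation of Eq.~(\ref{eq:uFku}) for \emph{every} real nonzero $\bm{u}$, so both (I) and (II) must collapse to equalities for every such $\bm{u}$.

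For the forward direction I would argue as follows. Since $\Pi_{k}$ is regular, $\text{Tr}(A^{\dagger}A)=\text{Tr}(\rho_{\bm{\lambda}}\Pi_{k})>0$, so $A\neq 0$; equality in (II) then forces $B=c_{\bm{u},k}A$ for some scalar $c_{\bm{u},k}$. Substituting this back gives $\text{Tr}(A^{\dagger}B)=c_{\bm{u},k}\text{Tr}(A^{\dagger}A)$, and saturation of (I) forces $c_{\bm{u},k}\in\mathbb{R}$. Because $B$ depends linearly on $\bm{u}$ while $A$ does not, $c_{\bm{u},k}$ is likewise linear in $\bm{u}$; specializing to $\bm{u}=\bm{e}_{i}$ yields $c_{\bm{u},k}=\sum_{i}\xi_{i}^{k}u_{i}$ with real coefficients $\xi_{i}^{k}$, and one arrives at the operator identity
\begin{equation}
\sqrt{\Pi_{k}}\,L_{i\bm{\lambda}}\sqrt{\rho_{\bm{\lambda}}}=\xi_{i}^{k}\sqrt{\Pi_{k}}\sqrt{\rho_{\bm{\lambda}}}.
\end{equation}
Applying both sides to an eigenvector $\ket{\psi_{n\bm{\lambda}}}$ and cancelling the strictly positive $\sqrt{p_{n\bm{\lambda}}}$ gives $\sqrt{\Pi_{k}}(L_{i\bm{\lambda}}-\xi_{i}^{k})\ket{\psi_{n\bm{\lambda}}}=0$. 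Since $\Pi_{k}$ and $\sqrt{\Pi_{k}}$ share the same spectral projectors and therefore the same null space, one more application of $\sqrt{\Pi_{k}}$ from the left promotes this to Eq.~(\ref{eq:sat-cond-reg}).

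The converse is obtained by reversing this chain: Eq.~(\ref{eq:sat-cond-reg}) together with the $\Pi_{k}$-vs-$\sqrt{\Pi_{k}}$ kernel equality yields $\sqrt{\Pi_{k}}L_{i\bm{\lambda}}\ket{\psi_{n\bm{\lambda}}}=\xi_{i}^{k}\sqrt{\Pi_{k}}\ket{\psi_{n\bm{\lambda}}}$; inserting the spectral decomposition $\sqrt{\rho_{\bm{\lambda}}}=\sum_{n}\sqrt{p_{n\bm{\lambda}}}\ket{\psi_{n\bm{\lambda}}}\bra{\psi_{n\bm{\lambda}}}$ recovers the displayed operator identity, so $B=(\sum_{i}u_{i}\xi_{i}^{k})A$ with a \emph{real} scalar, which simultaneously saturates (I) and (II) for arbitrary $\bm{u}$.

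The main obstacle is essentially organizational: isolating \emph{both} saturation conditions (Cauchy--Schwarz pinning down the proportionality of $A$ and $B$, and the ``real part'' step pinning the constant to $\mathbb{R}$), and then legitimately stripping the $\sqrt{\rho_{\bm{\lambda}}}$ on the right using strict positivity of $p_{n\bm{\lambda}}$ on the support, before converting the $\sqrt{\Pi_{k}}$ kernel relation into a $\Pi_{k}$ kernel relation. Each of these is short, but together they are exactly what distinguishes this multiparameter saturation criterion from the single--parameter argument of Braunstein and Caves, since the reality constraint on $\xi_{i}^{k}$ is automatic in the scalar case.
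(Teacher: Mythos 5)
Your proposal is correct and follows essentially the same route as the paper's proof: saturating the $(\mathrm{Re}\,z)^{2}\le|z|^{2}$ step to force the proportionality constant real, saturating Cauchy--Schwarz with the same $A=\sqrt{\Pi_{k}}\sqrt{\rho_{\bm{\lambda}}}$ and $B=\sum_{i}u_{i}\sqrt{\Pi_{k}}L_{i\bm{\lambda}}\sqrt{\rho_{\bm{\lambda}}}$ to get $\sqrt{\Pi_{k}}(L_{i\bm{\lambda}}-\xi_{i}^{k})\sqrt{\rho_{\bm{\lambda}}}=0$, and then using the coincidence of the kernels of $\rho_{\bm{\lambda}}$ with $\sqrt{\rho_{\bm{\lambda}}}$ and of $\Pi_{k}$ with $\sqrt{\Pi_{k}}$ to reach Eq.~(\ref{eq:sat-cond-reg}). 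The only cosmetic differences are that the paper fixes $\xi_{i}^{k}=\text{Tr}(\rho_{\bm{\lambda}}\Pi_{k}L_{i\bm{\lambda}})/\text{Tr}(\rho_{\bm{\lambda}}\Pi_{k})$ explicitly whereas you extract its reality from the linearity of $c_{\bm{u},k}$ in $\bm{u}$, and your appeal to Lemma~\ref{lem:matrix-scalar} for ``matrix equality iff quadratic forms agree for all real $\bm{u}$'' is a slight misattribution of an elementary fact about symmetric matrices, but neither affects correctness.
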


\begin{proof}The saturation of the first inequality of Eq.~(\ref{eq:bFrb2})
requires that $\text{Tr}(\rho_{\bm{\lambda}}\Pi_{k}\sum_{i}u_{i}L_{i\bm{\lambda}})$
must be real for any arbitrary real and nonzero vector $\bm{u}$.
Therefore $\text{Tr}(\rho_{\bm{\lambda}}\Pi_{k}L_{i\bm{\lambda}})$
must be real for each $i$. The saturation of the second inequality
of Eq.~(\ref{eq:bFrb2}) requires that $\sqrt{\Pi_{k}}\sqrt{\rho_{\bm{\lambda}}}$
must be proportional to $\sqrt{\Pi_{k}}\sum_{i}u_{i}L_{i\bm{\lambda}}\sqrt{\rho_{\bm{\lambda}}}$
for any arbitrary, non-zero and real $\bm{u}$. Thus $\sqrt{\Pi_{k}}\sqrt{\rho_{\bm{\lambda}}}$
must be proportional to $\sqrt{\Pi_{k}}L_{i\bm{\lambda}}\sqrt{\rho_{\bm{\lambda}}}$
for each $i$, i.e., 
\begin{equation}
\xi_{i}^{k}\sqrt{\Pi_{k}}\sqrt{\rho_{\bm{\lambda}}}=\sqrt{\Pi_{k}}L_{i\bm{\lambda}}\sqrt{\rho_{\bm{\lambda}}},\,\forall i.\label{eq:CS-eq-reg}
\end{equation}
Eq. (\ref{eq:CS-eq-reg}) can be rewritten as
\begin{equation}
\sqrt{\Pi_{k}}(L_{i\bm{\lambda}}-\xi_{i}^{k})\sqrt{\rho_{\bm{\lambda}}}=0,\,\forall i.\label{eq:CS-eq-reg2}
\end{equation}
Since $\rho_{\bm{\lambda}}$ and $\sqrt{\rho_{\bm{\lambda}}}$ has
the same kernel, Eq. (\ref{eq:CS-eq-reg2}) is equivalent as $\sqrt{\Pi_{k}}(L_{i\bm{\lambda}}-\xi_{i}^{k})\rho_{\bm{\lambda}}=0$,
which is equivalent to 
\begin{equation}
\Pi_{k}L_{i\bm{\lambda}}\rho_{\bm{\lambda}}=\xi_{i}^{k}\Pi_{k}\rho_{\bm{\lambda}}
\end{equation}
due to the same reason. With the spectral decomposition $\rho_{\bm{\lambda}}=\sum_{n}p_{n\bm{\lambda}}\ket{\psi_{n\bm{\lambda}}}\bra{\psi_{n\bm{\lambda}}}$,
we arrive at Eq. (\ref{eq:sat-cond-reg}). The proportionality constant
$\xi_{i}^{k}$ can be found by taking the trace inner product with
$\sqrt{\Pi_{k}}\sqrt{\rho_{\bm{\lambda}}}$ on both sides of Eq.~(\ref{eq:CS-eq-reg}),
i.e.,
\begin{equation}
\xi_{i}^{k}=\text{Tr}(\rho_{\bm{\lambda}}\Pi_{k}L_{i\bm{\lambda}})/\text{Tr}(\rho_{\bm{\lambda}}\Pi_{k}),\,\forall i.\label{eq:xi-xk}
\end{equation}
Therefore, the condition of $\text{Tr}(\rho_{\bm{\lambda}}\Pi_{k}L_{i\bm{\lambda}})$
being real is equivalent to that $\xi_{i}^{k}$ is real, which concludes
the proof. \end{proof}

\begin{thm} \label{thm:null}The matrix bound of the CFIM due to
a \textit{null} operator $\Pi_{k}$ is saturated at $\bm{\lambda}$,
if and only if

\begin{equation}
\Pi_{k}L_{i\bm{\lambda}}\ket{\psi_{n\bm{\lambda}}}=\eta_{ij}^{k}\Pi_{k}L_{j\bm{\lambda}}\ket{\psi_{n\bm{\lambda}}},\,\forall i,\,j,\,k,\,n\label{eq:sat-cond-null}
\end{equation}
where the $\eta_{ij}^{k}$ is real and independent of $n$. \end{thm}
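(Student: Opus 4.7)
The proof plan closely mirrors the structure of the proof of Theorem~\ref{thm:reg}. I would trace through the two inequalities in the chain (\ref{eq:del-lam-Tij-uj})--(\ref{eq:bF0b2}) and identify when each becomes an equality. The two inequalities are: (i) $[\mathrm{Re}(X)]^{2} \le |X|^{2}$ applied to $X \equiv \sum_{ij}\delta\lambda_{i}\,\mathrm{Tr}(\rho_{\bm{\lambda}}L_{i\bm{\lambda}}\Pi_{k}L_{j\bm{\lambda}})\,u_{j}$, saturated iff $X$ is real; and (ii) the Cauchy--Schwarz inequality $|\mathrm{Tr}(A^{\dagger}B)|^{2}\le \mathrm{Tr}(A^{\dagger}A)\,\mathrm{Tr}(B^{\dagger}B)$ with
\[
A=\sum_{i}\delta\lambda_{i}\sqrt{\Pi_{k}}L_{i\bm{\lambda}}\sqrt{\rho_{\bm{\lambda}}},\quad B=\sum_{i}u_{i}\sqrt{\Pi_{k}}L_{i\bm{\lambda}}\sqrt{\rho_{\bm{\lambda}}},
\]
saturated iff $A \propto B$. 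By Lemma~\ref{lem:matrix-scalar}, saturation of the matrix bound is equivalent to the scalar equality for \emph{every} real nonzero $\bm{u}$; and because $\mathcal{F}_{ij}^{k}$ defined via the multivariate limit (\ref{eq:Fijk-redef}) inherits a dependence on the approach direction $\delta\bm{\lambda}$, the equality in (\ref{eq:bF0b2}) must in fact hold for every $\bm{u}$ and every $\delta\bm{\lambda}$.

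Exploiting this double arbitrariness, I would specialize to $\delta\bm{\lambda}=\hat{e}_{p}$ and $\bm{u}=\hat{e}_{q}$ in the saturated Cauchy--Schwarz condition $A=c\,B$, which forces $\sqrt{\Pi_{k}}L_{i\bm{\lambda}}\sqrt{\rho_{\bm{\lambda}}}=\eta_{ij}^{k}\sqrt{\Pi_{k}}L_{j\bm{\lambda}}\sqrt{\rho_{\bm{\lambda}}}$ for each pair $(i,j)$. Because $\sqrt{\rho_{\bm{\lambda}}}$ and $\rho_{\bm{\lambda}}$ share the same support (as in the proof of Theorem~\ref{thm:reg}), this promotes to the operator identity $\Pi_{k}L_{i\bm{\lambda}}\rho_{\bm{\lambda}}=\eta_{ij}^{k}\Pi_{k}L_{j\bm{\lambda}}\rho_{\bm{\lambda}}$. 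Substituting the spectral decomposition of $\rho_{\bm{\lambda}}$ and using strict positivity of the $p_{n\bm{\lambda}}$'s together with the linear independence of $\{\bra{\psi_{n\bm{\lambda}}}\}$, one peels off $\bra{\psi_{n\bm{\lambda}}}$ from the right to obtain Eq.~(\ref{eq:sat-cond-null}); crucially, the $\eta_{ij}^{k}$'s are inherited from a single operator identity and are therefore independent of $n$. To pin down reality, I would use the proportionality to rewrite the Hilbert--Schmidt inner product $\mathrm{Tr}(\rho_{\bm{\lambda}}L_{i\bm{\lambda}}\Pi_{k}L_{j\bm{\lambda}})=(\eta_{ij}^{k})^{*}\,\|\sqrt{\Pi_{k}}L_{j\bm{\lambda}}\sqrt{\rho_{\bm{\lambda}}}\|_{2}^{2}$; reality of $X$ for arbitrary $\bm{u},\delta\bm{\lambda}$ then forces $\eta_{ij}^{k}\in\mathbb{R}$.

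The converse is routine: assuming Eq.~(\ref{eq:sat-cond-null}) with real $\eta_{ij}^{k}$, one substitutes into Eq.~(\ref{eq:Fijk-null}) through Eq.~(\ref{eq:Tijk-reexpr}) and verifies that both inequalities collapse to equalities. The main technical obstacle I anticipate is careful handling of the multivariate limit: unlike in Theorem~\ref{thm:reg}, the lower bound $\mathcal{F}_{ij}^{k}$ for null operators genuinely depends on $\delta\bm{\lambda}$, so one must confirm that matrix saturation really does impose the proportionality for \emph{every} direction rather than some privileged one. A secondary subtlety is that $\eta_{ij}^{k}$ is only meaningful when $\sqrt{\Pi_{k}}L_{j\bm{\lambda}}\sqrt{\rho_{\bm{\lambda}}}\neq 0$; in the degenerate case, both sides of Eq.~(\ref{eq:sat-cond-null}) vanish simultaneously and the identity holds vacuously.
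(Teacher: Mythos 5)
Your proposal is correct and follows essentially the same route as the paper's proof: saturate the $[\mathrm{Re}(X)]^{2}\le|X|^{2}$ step and the Cauchy--Schwarz step in Eq.~(\ref{eq:del-lam-Tij-uj}) for arbitrary $\bm{u}$ and $\delta\bm{\lambda}$, promote $\sqrt{\Pi_{k}}L_{i\bm{\lambda}}\sqrt{\rho_{\bm{\lambda}}}=\eta_{ij}^{k}\sqrt{\Pi_{k}}L_{j\bm{\lambda}}\sqrt{\rho_{\bm{\lambda}}}$ to $\Pi_{k}L_{i\bm{\lambda}}\rho_{\bm{\lambda}}=\eta_{ij}^{k}\Pi_{k}L_{j\bm{\lambda}}\rho_{\bm{\lambda}}$ via the shared kernel of $\rho_{\bm{\lambda}}$ and $\sqrt{\rho_{\bm{\lambda}}}$, and extract reality of $\eta_{ij}^{k}$ from the Hilbert--Schmidt inner product, exactly as in the paper. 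Your added remarks on the direction-dependence of the multivariate limit and on the degenerate case $\sqrt{\Pi_{k}}L_{j\bm{\lambda}}\sqrt{\rho_{\bm{\lambda}}}=0$ are sensible refinements rather than a different argument.
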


\begin{proof}Since the saturation of Eq.~(\ref{eq:bF0b2}) is equivalent
as the saturation of the two inequalities in Eq.~(\ref{eq:del-lam-Tij-uj}),
we will work with Eq.~(\ref{eq:del-lam-Tij-uj}) subsequently. The
saturation of the first inequality in Eq.~(\ref{eq:del-lam-Tij-uj})
requires $\sum_{ij}\delta\lambda_{i}\text{Tr}(\rho_{\bm{\lambda}}L_{i\bm{\lambda}}\Pi_{k}L_{j\bm{\lambda}})u_{j}$
is real for any $\delta\bm{\lambda},\,\bm{u}$. This indicates that
$\text{Tr}(\rho_{\bm{\lambda}}L_{i\bm{\lambda}}\Pi_{k}L_{j\bm{\lambda}})$
must be real for any pair $i,\,j$. The saturation of the second inequality
in Eq.~(\ref{eq:del-lam-Tij-uj}) requires $\sum_{i}\delta\lambda_{i}\sqrt{\Pi_{k}}L_{i\bm{\lambda}}\sqrt{\rho_{\bm{\lambda}}}$
be proportional to $\sum_{j}u_{j}\sqrt{\Pi_{k}}L_{j\bm{\lambda}}\sqrt{\rho_{\bm{\lambda}}}$
for any $\delta\bm{\lambda},\,\bm{u}$. Thus $\sqrt{\Pi_{k}}L_{i\bm{\lambda}}\sqrt{\rho_{\bm{\lambda}}}$
must be proportional to $\sqrt{\Pi_{k}}L_{j\bm{\lambda}}\sqrt{\rho_{\bm{\lambda}}}$
for each pair $(i,\,j)$, i.e.,
\begin{equation}
\sqrt{\Pi_{k}}L_{i\bm{\lambda}}\sqrt{\rho_{\bm{\lambda}}}=\eta_{ij}^{k}\sqrt{\Pi_{k}}L_{j\bm{\lambda}}\sqrt{\rho_{\bm{\lambda}}},\,\forall i,\,j,\,k.\label{eq:CS-eq-null}
\end{equation}
Since $\rho_{\bm{\lambda}}$ and $\sqrt{\rho_{\bm{\lambda}}}$ has
the same kernel, Eq. (\ref{eq:CS-eq-null}) is equivalent as $\sqrt{\Pi_{k}}(L_{i\bm{\lambda}}-\eta_{ij}^{k}L_{j\bm{\lambda}})\rho_{\bm{\lambda}}=0$,
which is equivalent to 
\begin{equation}
\Pi_{k}L_{i\bm{\lambda}}\rho_{\bm{\lambda}}=\eta_{ij}^{k}\Pi_{k}L_{j\bm{\lambda}}\rho_{\bm{\lambda}}
\end{equation}
due to the same reason. With the spectral decomposition $\rho_{\bm{\lambda}}=\sum_{n}p_{n\bm{\lambda}}\ket{\psi_{n\bm{\lambda}}}\bra{\psi_{n\bm{\lambda}}}$,
we arrive at Eq. (\ref{eq:sat-cond-null}). The proportionality constant
$\eta_{ij}^{k}$ can be found by taking the trace inner product with
$\sqrt{\Pi_{k}}L_{j\bm{\lambda}}\sqrt{\rho_{\bm{\lambda}}}$ on both
sides of Eq.~(\ref{eq:CS-eq-reg}), i.e.,
\begin{equation}
\eta_{ij}^{k}=\frac{\text{Tr}(\rho_{\bm{\lambda}}L_{j\bm{\lambda}}\Pi_{k}L_{i\bm{\lambda}})}{\text{Tr}(\rho_{\bm{\lambda}}L_{j\bm{\lambda}}\Pi_{k}L_{j\bm{\lambda}})}
\end{equation}
Since $\text{Tr}(\rho_{\bm{\lambda}}L_{j\bm{\lambda}}\Pi_{k}L_{j\bm{\lambda}})$
is real, the condition that $\text{Tr}(\rho_{\bm{\lambda}}L_{i\bm{\lambda}}\Pi_{k}L_{j\bm{\lambda}})$
must be real is equivalent to that $\eta_{ij}^{k}$ must be real,
which concludes the proof. \end{proof}

\subsection{The partial commutativity condition}

Up to now Theorems \ref{thm:reg}-\ref{thm:null} focus on the exact
saturation of the QFIM. For limited experimental situations, where
only separable measurements or collective measurements that involve
only a small number of states are experimentally implementable, the
CFIM scales exactly linearly with the number of identical states.
As a result, the QFIM can only be saturated exactly. We show in the
following theorem that saturating the QFIM \textit{exactly} the following
theorem must hold.

\begin{thm}(\textit{The partial commutativity condition}) To \textit{exactly}
saturate the QFIM, it is necessary to have the SLD must commute on
the support of $\rho_{\bm{\lambda}}$, i.e., $\braket{\psi_{m\bm{\lambda}}\big|[L_{i\bm{\lambda}},\,L_{j\bm{\lambda}}]\big|\psi_{n\bm{\lambda}}}=0,\,\forall n,\,m$.

\end{thm}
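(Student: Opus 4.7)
The plan is to insert the completeness relation $\sum_k \Pi_k = I$ into the matrix element $\braket{\psi_{m\bm{\lambda}} | L_{i\bm{\lambda}} L_{j\bm{\lambda}} | \psi_{n\bm{\lambda}}}$ and use the saturation conditions of Theorems~\ref{thm:reg}--\ref{thm:null} to show that each term in the resulting sum is symmetric under $i \leftrightarrow j$, so that the commutator matrix element annihilates term by term.

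First I would observe that exact saturation of the QFIM, $F_{ij} = I_{ij}$, forces the matrix equality $\mathcal{F}^k = \mathcal{I}^k$ for every $k$: each summand of $I - F = \sum_k (\mathcal{I}^k - \mathcal{F}^k)$ is positive semi-definite by Eq.~(\ref{eq:uFku}), so their sum can vanish only if each summand vanishes. Hence both Theorems~\ref{thm:reg} and~\ref{thm:null} are available for every POVM operator of the optimal measurement.

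Next, suppressing the $\bm{\lambda}$ subscripts and inserting $\sum_k \Pi_k = I$, I would split $\braket{\psi_m | L_i L_j | \psi_n} = \sum_k \braket{\psi_m | L_i \Pi_k L_j | \psi_n}$ into regular and null contributions. For a regular $\Pi_k$, Eq.~(\ref{eq:sat-cond-reg}) together with hermiticity of $L_i$ and reality of $\xi_i^k$ yields both $\Pi_k L_i \ket{\psi_n} = \xi_i^k \Pi_k \ket{\psi_n}$ and $\bra{\psi_m} L_i \Pi_k = \xi_i^k \bra{\psi_m} \Pi_k$, so the term collapses to $\xi_i^k \xi_j^k \braket{\psi_m | \Pi_k | \psi_n}$, manifestly symmetric in $(i,j)$. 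For a null $\Pi_k$ the vector $\Pi_k \ket{\psi_n}$ vanishes by Lemma~\ref{lem:kerPi}, yet $\Pi_k L_i \ket{\psi_n}$ generally does not. Here Eq.~(\ref{eq:sat-cond-null}) with real $\eta_{ij}^k$ gives simultaneously $\Pi_k L_i \ket{\psi_n} = \eta_{ij}^k \Pi_k L_j \ket{\psi_n}$ and $\bra{\psi_m} L_i \Pi_k = \eta_{ij}^k \bra{\psi_m} L_j \Pi_k$, so both $\braket{\psi_m | L_i \Pi_k L_j | \psi_n}$ and $\braket{\psi_m | L_j \Pi_k L_i | \psi_n}$ reduce to the common value $\eta_{ij}^k \braket{\psi_m | L_j \Pi_k L_j | \psi_n}$. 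Their difference, summed over $k$, yields $\braket{\psi_m | [L_i, L_j] | \psi_n} = 0$.

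The hard part will be the null sector: a naive attempt using only $\Pi_k \ket{\psi_n} = 0$ fails, because $L_i$ can rotate $\ket{\psi_n}$ out of $\mathrm{supp}(\rho_{\bm{\lambda}})$ onto $\mathrm{supp}(\Pi_k)$. What rescues the step is the strong collinearity imposed by~(\ref{eq:sat-cond-null}): the whole family $\{\Pi_k L_i \ket{\psi_n}\}_i$ is forced to be parallel through a \emph{real} scalar, which is exactly what allows the symmetric reorganization above. A final housekeeping check is the degenerate subcase where $\Pi_k L_j \sqrt{\rho_{\bm{\lambda}}}$ itself vanishes: then both sides of~(\ref{eq:sat-cond-null}) are zero and the null contribution is trivially symmetric in $(i,j)$, so the conclusion is unaffected.
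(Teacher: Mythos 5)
Your proposal is correct and follows essentially the same route as the paper's proof: insert $\sum_k\Pi_k=I$, collapse each regular term to $\xi_i^k\xi_j^k\braket{\psi_{m\bm{\lambda}}\big|\Pi_k\big|\psi_{n\bm{\lambda}}}$ via Eq.~(\ref{eq:sat-cond-reg}) and its adjoint, and show each null term and its $i\leftrightarrow j$ partner both equal $\eta_{ij}^k\braket{\psi_{m\bm{\lambda}}\big|L_{j\bm{\lambda}}\Pi_kL_{j\bm{\lambda}}\big|\psi_{n\bm{\lambda}}}$ via Eq.~(\ref{eq:sat-cond-null}). Your explicit preliminary step---that vanishing of a sum of positive semi-definite matrices $\mathcal{I}^k-\mathcal{F}^k$ forces each summand to vanish, so Theorems~\ref{thm:reg} and~\ref{thm:null} apply to every POVM element---is only stated informally in the paper's surrounding text, and spelling it out (along with the degenerate null subcase) is a welcome but minor addition.
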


\begin{proof}If $\Pi_{k}$ is regular, according to Eq.~(\ref{eq:sat-cond-reg}),
we obtain
\begin{align}
\braket{\psi_{m\bm{\lambda}}\big|L_{i\bm{\lambda}}\Pi_{k}L_{j\bm{\lambda}}\big|\psi_{n\bm{\lambda}}} & =\xi_{j}^{k}\braket{\psi_{m\bm{\lambda}}\big|L_{i\bm{\lambda}}\Pi_{k}\big|\psi_{n\bm{\lambda}}}\nonumber \\
 & =\xi_{i}^{k}\xi_{j}^{k}\braket{\psi_{m\bm{\lambda}}\big|\Pi_{k}\big|\psi_{n\bm{\lambda}}}\nonumber \\
 & =\braket{\psi_{m\bm{\lambda}}\big|L_{j\bm{\lambda}}\Pi_{k}L_{i\bm{\lambda}}\big|\psi_{n\bm{\lambda}}}
\end{align}
If $\Pi_{k}$ is null, according to Eq.~(\ref{eq:sat-cond-null}),
we obtain 
\begin{align}
\braket{\psi_{m\bm{\lambda}}\big|L_{i\bm{\lambda}}\Pi_{k}L_{j\bm{\lambda}}\big|\psi_{n\bm{\lambda}}} & =\eta_{ij}^{k}\braket{\psi_{m\bm{\lambda}}\big|L_{j\bm{\lambda}}\Pi_{k}L_{j\bm{\lambda}}\big|\psi_{n\bm{\lambda}}}\nonumber \\
 & =\braket{\psi_{m\bm{\lambda}}\big|L_{j\bm{\lambda}}\eta^{ijk}\Pi_{k}L_{j\bm{\lambda}}\big|\psi_{n\bm{\lambda}}}\nonumber \\
 & =\braket{\psi_{m\bm{\lambda}}\big|L_{j\bm{\lambda}}\Pi_{k}L_{i\bm{\lambda}}\big|\psi_{n\bm{\lambda}}}
\end{align}
Thus we can see that the optimal measurement mediates the commutativity
between $L_{i\bm{\lambda}}$ and $L_{j\bm{\lambda}}$ on the support
of $\rho_{\bm{\lambda}}$, i.e., for any $m$ and $n$,
\begin{align}
\braket{\psi_{m\bm{\lambda}}\big|[L_{i\bm{\lambda}},\,L_{j\bm{\lambda}}]\big|\psi_{n\bm{\lambda}}} & =\sum_{k}\braket{\psi_{m\bm{\lambda}}\big|L_{i\bm{\lambda}}\Pi_{k}L_{j\bm{\lambda}}\big|\psi_{n\bm{\lambda}}}\nonumber \\
 & -\sum_{k}\braket{\psi_{m\bm{\lambda}}\big|L_{j\bm{\lambda}}\Pi_{k}L_{i\bm{\lambda}}\big|\psi_{n\bm{\lambda}}}=0
\end{align}
 \end{proof}

The partial commutativity condition reduces to the weak commutativity
condition \citep{matsumoto2002anew} for pure states and the full
commutativity condition for full rank states. While the sufficiency
of the partial commutativity to saturate the bound has been proved
to be true in the case of pure states \citep{matsumoto2002anew,pezz`e2017optimal}
and is trivially true in the case of the full rank states, whether
it is still true in the general case is beyond the scope of the current
work.

When collective measurements that entangle a large number of states
are allowed, where the CFIM has the sublinear corrections in general
besides the linear term with respect to the number of identical copies,
asymptotic saturation becomes relevant. However, we emphasize that
our Theorems \ref{thm:reg}-\ref{thm:null} should be asymptotically
satisfied if QFIM is asymptotically saturated. Ref. \citep{ragy2016compatibility}
concludes that if all possible measurements are allowed to perform
on the large number of states, the weak commutativity condition $\text{Tr}(\rho_{\bm{\lambda}}[L_{i\bm{\lambda}},\,L_{j\bm{\lambda}}])=0$
is necessary and sufficient for the saturation of the Helstrom bound.
However, the precise connection between our formalism here and Ref.
\citep{ragy2016compatibility} is still an open question.

\subsection{Alternative saturation conditions}

When one deals with specific problems, e.g., the problem of superresolution
of two incoherent optical point sources in Sec. \ref{sec:superresolution},
it is useful to consider the spectral decomposition of the POVM operator.
Combining with the matrix representation of $L_{i\bm{\lambda}}$ given
in Appendix \ref{sec:Matrix-SLD}, Theorems \ref{thm:reg}, \ref{thm:null}
can be alternatively rewritten as follows:

\begin{thm}\label{thm:reg-explicit}The matrix bound of the CFIM
due to a \textit{regular} operator $\Pi_{k}=\sum_{\alpha}q_{k\alpha}\ket{\pi_{k\alpha}}\bra{\pi_{k\alpha}}$
is saturated at $\bm{\lambda}$, if and only if
\begin{equation}
\braket{\psi_{n\bm{\lambda}}\big|L_{i\bm{\lambda}}^{\perp}\big|\pi_{k\alpha}}+2\braket{\partial_{i}^{0}\psi_{n\bm{\lambda}}\big|\pi_{k\alpha}}=\xi_{i}^{k}\braket{\psi_{n\bm{\lambda}}\big|\pi_{k\alpha}}\;\forall i,\,n,\,\alpha,\label{eq:reg-explicit}
\end{equation}
where $\xi_{i}^{k}$ is real and independent of $n$ and $\alpha$,
and $L_{i\bm{\lambda}}^{\perp}$ defined as Eq. (\ref{eq:Liperp})
denotes the projection of $L_{i\bm{\lambda}}$ onto $\text{supp}(\rho_{\bm{\lambda}})$.\end{thm}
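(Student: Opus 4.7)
The plan is to convert the operator identity $\Pi_{k}L_{i\bm{\lambda}}\ket{\psi_{n\bm{\lambda}}}=\xi_{i}^{k}\Pi_{k}\ket{\psi_{n\bm{\lambda}}}$ of Theorem \ref{thm:reg} into the scalar matrix-element statement \eqref{eq:reg-explicit} by using the spectral resolution of $\Pi_{k}$ together with the support/kernel block structure of the SLD recorded in Appendix \ref{sec:Matrix-SLD}. Since the new claim uses the same scalars $\xi_{i}^{k}$ and carries the same reality/independence restrictions, it should be a straight reformulation of Theorem \ref{thm:reg} rather than an independent result, and the proof should be an iff chain of algebraic rewritings.

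First I would insert $\Pi_{k}=\sum_{\alpha}q_{k\alpha}\ket{\pi_{k\alpha}}\bra{\pi_{k\alpha}}$ into the condition of Theorem \ref{thm:reg} and contract both sides with $\bra{\pi_{k\alpha}}$ on the left. Because $q_{k\alpha}>0$, the common prefactor divides out and one obtains the reduced scalar identity $\braket{\pi_{k\alpha}|L_{i\bm{\lambda}}|\psi_{n\bm{\lambda}}}=\xi_{i}^{k}\braket{\pi_{k\alpha}|\psi_{n\bm{\lambda}}}$ for every $i,n,\alpha$, which is manifestly equivalent to the operator statement because the family $\{\ket{\pi_{k\alpha}}\}$ spans the support of $\Pi_{k}$. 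What remains is to rewrite the left-hand side in the advertised form.

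Next I would decompose $\ket{\pi_{k\alpha}}=\ket{\pi_{k\alpha}^{\perp}}+\ket{\pi_{k\alpha}^{0}}$ into its support and kernel pieces. The right-hand side is unchanged since $\ket{\psi_{n\bm{\lambda}}}\in\text{supp}(\rho_{\bm{\lambda}})$ annihilates the kernel part. On the left, the support-support block gives $\braket{\pi_{k\alpha}|L_{i\bm{\lambda}}^{\perp}|\psi_{n\bm{\lambda}}}$ by the very definition of $L_{i\bm{\lambda}}^{\perp}$. For the kernel-support block I would sandwich the SLD equation \eqref{eq:SLD} between $\bra{\pi_{k\alpha}^{0}}$ and $\ket{\psi_{n\bm{\lambda}}}$: the term $\rho_{\bm{\lambda}}L_{i\bm{\lambda}}$ drops out because $\bra{\pi_{k\alpha}^{0}}\rho_{\bm{\lambda}}=0$, while $L_{i\bm{\lambda}}\rho_{\bm{\lambda}}\ket{\psi_{n\bm{\lambda}}}=p_{n\bm{\lambda}}L_{i\bm{\lambda}}\ket{\psi_{n\bm{\lambda}}}$ and the strict positivity of $p_{n\bm{\lambda}}$ together yield $\braket{\pi_{k\alpha}^{0}|L_{i\bm{\lambda}}|\psi_{n\bm{\lambda}}}=2\braket{\pi_{k\alpha}^{0}|\partial_{i}\psi_{n\bm{\lambda}}}=2\braket{\pi_{k\alpha}|\partial_{i}^{0}\psi_{n\bm{\lambda}}}$, the last equality following because $\ket{\partial_{i}^{0}\psi_{n\bm{\lambda}}}$ lives in the kernel. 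Summing the two blocks and taking the complex conjugate (using the Hermiticity of $L_{i\bm{\lambda}}^{\perp}$ and the reality of $\xi_{i}^{k}$) produces exactly \eqref{eq:reg-explicit}.

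For the converse I would multiply \eqref{eq:reg-explicit} by $q_{k\alpha}\ket{\pi_{k\alpha}}$, sum over $\alpha$, and reverse the projection and conjugation steps; this reassembles $\Pi_{k}$ and reconstitutes the operator condition of Theorem \ref{thm:reg}, so the two formulations are indeed equivalent. The only bookkeeping obstacle worth flagging is to keep track of which block of $L_{i\bm{\lambda}}$ enters where; in particular, the kernel-kernel block of $L_{i\bm{\lambda}}$ is not fixed by \eqref{eq:SLD}, but it never appears here because one factor in every matrix element sits in $\text{supp}(\rho_{\bm{\lambda}})$. Consequently Theorem \ref{thm:reg-explicit} follows as a direct corollary of Theorem \ref{thm:reg} combined with the matrix representation of Appendix \ref{sec:Matrix-SLD}.
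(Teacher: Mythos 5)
Your proposal is correct and follows essentially the same route as the paper: insert the spectral decomposition of $\Pi_{k}$ into the operator condition of Theorem \ref{thm:reg}, use orthonormality (equivalently, linear independence) of the $\ket{\pi_{k\alpha}}$ to reduce it to the scalar identity $\braket{\pi_{k\alpha}\big|L_{i\bm{\lambda}}\big|\psi_{n\bm{\lambda}}}=\xi_{i}^{k}\braket{\pi_{k\alpha}\big|\psi_{n\bm{\lambda}}}$, and then rewrite the matrix element via the support/kernel split of $L_{i\bm{\lambda}}$. The only cosmetic difference is that you re-derive the identity of Eq.~(\ref{eq:LiPsiPerp}) in line by sandwiching the SLD defining equation, whereas the paper simply cites it from Appendix \ref{sec:Matrix-SLD}; your explicit remark that the undetermined kernel--kernel block of the SLD never enters is a correct and worthwhile observation.
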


\begin{proof}For a regular POVM operator $\Pi_{k}$ with a spectral
decomposition $\Pi_{k}=\sum_{\alpha}q_{k\alpha}\ket{\pi_{k\alpha}}\bra{\pi_{k\alpha}}$
where $q_{k\alpha}$ is strictly positive, Eq. (\ref{eq:sat-cond-reg})
becomes 
\begin{gather}
\sum_{\alpha}q_{k\alpha}\ket{\pi_{k\alpha}}\bra{\pi_{k\alpha}}L_{i\bm{\lambda}}\ket{\psi_{n\bm{\lambda}}}=\nonumber \\
\sum_{\alpha}q_{k\alpha}\xi_{i}^{k}\ket{\pi_{k\alpha}}\braket{\pi_{k\alpha}\big|\psi_{n\bm{\lambda}}},\:\forall i,\,n.
\end{gather}
Since $\ket{\pi_{k\alpha}}$'s are linearly independent, the above
equation is equivalent to 
\begin{equation}
\braket{\pi_{k\alpha}\big|L_{i\bm{\lambda}}\big|\psi_{n\bm{\lambda}}}=\xi_{i}^{k}\braket{\pi_{k\alpha}\big|\psi_{n\bm{\lambda}}},\:\forall i,\,\alpha,\,n.
\end{equation}
with $\xi_{i}^{k}$ being real and independent of $n$. According
to Eq.~(\ref{eq:LiPsiPerp}), we have 
\begin{equation}
\braket{\psi_{n\bm{\lambda}}\big|L_{i\bm{\lambda}}\big|\pi_{k\alpha}}=\braket{\psi_{n\bm{\lambda}}\big|L_{i\bm{\lambda}}^{\perp}\big|\pi_{k\alpha}}+2\braket{\partial_{i}^{0}\psi_{n\bm{\lambda}}\big|\pi_{k\alpha}}
\end{equation}
which concludes the proof. \end{proof}

\begin{thm}\label{thm:null-explicit}The matrix bound of the CFIM
due to a \textit{null} operator $\Pi_{k}=\sum_{\alpha}q_{k\alpha}\ket{\pi_{k\alpha}}\bra{\pi_{k\alpha}}$
is saturated at $\bm{\lambda}$, if and only if
\begin{equation}
\braket{\partial_{i}\tilde{\psi}_{n\bm{\lambda}}\big|\pi_{k\alpha}}=\eta_{ij}^{k}\braket{\partial_{j}\tilde{\psi}_{n\bm{\lambda}}\big|\pi_{k\alpha}}\forall i,\,j,\,n,\alpha,\label{eq:null-explicit}
\end{equation}
where $\ket{\tilde{\psi}_{n\bm{\lambda}}}$ is not necessarily normalized
and $\eta_{ij}^{k}$ is real and independent of $n$ and $\alpha$.\end{thm}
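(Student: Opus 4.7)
My plan is to mirror the strategy used in the proof of Theorem \ref{thm:reg-explicit}, substituting the spectral decomposition of $\Pi_k$ into the operator-valued saturation condition established by Theorem \ref{thm:null} and then re-expressing the resulting SLD matrix elements in terms of derivatives of the (unnormalized) eigenvectors of $\rho_{\bm{\lambda}}$.

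First, I would start from Theorem \ref{thm:null}, whose saturation condition reads $\Pi_k L_{i\bm{\lambda}}\ket{\psi_{n\bm{\lambda}}} = \eta_{ij}^k \Pi_k L_{j\bm{\lambda}}\ket{\psi_{n\bm{\lambda}}}$ for all $i,j,k,n$. Inserting the spectral decomposition $\Pi_k = \sum_{\alpha} q_{k\alpha}\ket{\pi_{k\alpha}}\bra{\pi_{k\alpha}}$ and using the fact that the eigenvectors $\ket{\pi_{k\alpha}}$ are linearly independent while the weights $q_{k\alpha}$ are strictly positive, I can peel off each $\alpha$ and reduce the operator equation to the scalar identity
\begin{equation}
\braket{\pi_{k\alpha}\big|L_{i\bm{\lambda}}\big|\psi_{n\bm{\lambda}}} = \eta_{ij}^k \braket{\pi_{k\alpha}\big|L_{j\bm{\lambda}}\big|\psi_{n\bm{\lambda}}},\quad\forall i,j,n,\alpha.
\end{equation}
This is simply the regular-operator step in the proof of Theorem \ref{thm:reg-explicit} transcribed to the null case.

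Next, I need to rewrite the matrix element $\braket{\pi_{k\alpha}|L_{i\bm{\lambda}}|\psi_{n\bm{\lambda}}}$ in terms of $\ket{\partial_i \tilde{\psi}_{n\bm{\lambda}}}$. Here I will exploit that, by Lemma \ref{lem:kerPi}, every eigenvector $\ket{\pi_{k\alpha}}$ of a null POVM operator lies entirely in $\ker(\rho_{\bm{\lambda}})$; consequently $\bra{\pi_{k\alpha}}\rho_{\bm{\lambda}} = 0$. Taking the matrix element $\braket{\pi_{k\alpha}|\cdots|\psi_{n\bm{\lambda}}}$ of the defining SLD equation (\ref{eq:SLD}) and using $\rho_{\bm{\lambda}}\ket{\psi_{n\bm{\lambda}}} = p_{n\bm{\lambda}}\ket{\psi_{n\bm{\lambda}}}$ together with $\braket{\pi_{k\alpha}|\psi_{m\bm{\lambda}}} = 0$ for every $m$, I will obtain $p_{n\bm{\lambda}}\braket{\pi_{k\alpha}|L_{i\bm{\lambda}}|\psi_{n\bm{\lambda}}} = 2 p_{n\bm{\lambda}} \braket{\pi_{k\alpha}|\partial_i \psi_{n\bm{\lambda}}}$, so that (since $p_{n\bm{\lambda}} > 0$) the SLD matrix element reduces to $2\braket{\pi_{k\alpha}|\partial_i \psi_{n\bm{\lambda}}}$. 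Absorbing the positive normalization $\sqrt{p_{n\bm{\lambda}}}$ into the ket defines the unnormalized vectors $\ket{\tilde{\psi}_{n\bm{\lambda}}}$, and taking complex conjugates (permissible because $\eta_{ij}^k$ is real by Theorem \ref{thm:null}) turns the identity above into exactly Eq.~(\ref{eq:null-explicit}). For the converse direction, the same chain of equivalences runs in reverse, using the linear independence of the $\ket{\pi_{k\alpha}}$ to reassemble the operator identity of Theorem \ref{thm:null}.

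The only real subtlety I anticipate is making sure the SLD matrix element is genuinely determined by $\partial_i \psi_{n\bm{\lambda}}$ rather than by components of $L_{i\bm{\lambda}}$ that are unconstrained by Eq.~(\ref{eq:SLD}) on $\ker(\rho_{\bm{\lambda}})$. This is precisely why the null case involves only the kernel-projected derivative: the identity $\bra{\pi_{k\alpha}}\rho_{\bm{\lambda}} = 0$ eliminates any contribution from $\bra{\pi_{k\alpha}} L_{i\bm{\lambda}}\ket{\psi_{m\bm{\lambda}}}$ for $\ket{\psi_{m\bm{\lambda}}} \in \text{supp}(\rho_{\bm{\lambda}})$ multiplied by support-eigenvalues, so no gauge freedom in $L_{i\bm{\lambda}}$ on the kernel contaminates the result. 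Once this is noted, the argument is essentially a direct translation of the proof of Theorem \ref{thm:reg-explicit} with Theorem \ref{thm:null} as input, and all remaining steps are routine algebra identical in spirit to those already carried out in the excerpt.
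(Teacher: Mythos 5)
Your proposal is correct and follows the same overall route as the paper: reduce Theorem \ref{thm:null} to the scalar identity $\braket{\pi_{k\alpha}|L_{i\bm{\lambda}}|\psi_{n\bm{\lambda}}}=\eta_{ij}^{k}\braket{\pi_{k\alpha}|L_{j\bm{\lambda}}|\psi_{n\bm{\lambda}}}$ via linear independence of the $\ket{\pi_{k\alpha}}$ and strict positivity of the $q_{k\alpha}$, identify that matrix element with $2\braket{\pi_{k\alpha}|\partial_{i}\psi_{n\bm{\lambda}}}$, and pass to the unnormalized eigenvectors using $\braket{\psi_{n\bm{\lambda}}|\pi_{k\alpha}}=0$. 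The one place you diverge is in how the middle step is obtained: the paper invokes the explicit matrix representation of the SLD from Appendix \ref{sec:Matrix-SLD} (Eq.~\eqref{eq:LiPsiPerp}, with the $L_{i\bm{\lambda}}^{\perp}$ term dropped because $\ket{\pi_{k\alpha}}\in\ker(\rho_{\bm{\lambda}})$), whereas you sandwich the defining equation \eqref{eq:SLD} directly between $\bra{\pi_{k\alpha}}$ and $\ket{\psi_{n\bm{\lambda}}}$ and use $\bra{\pi_{k\alpha}}\rho_{\bm{\lambda}}=0$ together with $p_{n\bm{\lambda}}>0$. Your version is self-contained and slightly more elementary, and your remark about gauge freedom is on point: the kernel--support block of $L_{i\bm{\lambda}}$ is uniquely fixed by Eq.~\eqref{eq:SLD} precisely because the surviving coefficient is $p_{n\bm{\lambda}}/2>0$, so the undetermined kernel--kernel block never enters. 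All equivalences are reversible, so the if-and-only-if claim goes through.
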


\begin{proof}For a null POVM operator $\Pi_{k}=\sum_{\alpha}q_{k\alpha}\ket{\pi_{k\alpha}}\bra{\pi_{k\alpha}}$
, according to Eq.~(\ref{eq:LiPsiPerp}), Eq. (\ref{eq:sat-cond-null})
becomes
\begin{gather}
\sum_{\alpha}q_{k\alpha}\ket{\pi_{k\alpha}}\braket{\pi_{k\alpha}\big|L_{i\bm{\lambda}}\big|\psi_{n\bm{\lambda}}}\nonumber \\
=\sum_{\alpha}\eta_{ij}^{k}q_{k\alpha}\ket{\pi_{k\alpha}}\braket{\pi_{k\alpha}\big|L_{j\bm{\lambda}}\big|\psi_{n\bm{\lambda}}}\label{eq:sat-null-explicit-0}
\end{gather}
Since $\ket{\pi_{k\alpha}}$'s are linearly independent, the above
equation is equivalent to 
\begin{equation}
\braket{\pi_{k\alpha}\big|L_{i\bm{\lambda}}\big|\psi_{n\bm{\lambda}}}=\eta_{ij}^{k}\braket{\pi_{k\alpha}\big|L_{j\bm{\lambda}}\big|\psi_{n\bm{\lambda}}},\:\forall i,\,\alpha,\,n.
\end{equation}
with $\eta_{ij}^{k}$ being real and independent of $n$. Since $\Pi_{k}$
is null, $\braket{\pi_{k\alpha}\big|\psi_{n\bm{\lambda}}}=0,\,\forall n,\,\alpha$.
Therefore, according to Eq.~(\ref{eq:LiPsiPerp}), we have 
\begin{equation}
\braket{\psi_{n\bm{\lambda}}\big|L_{i\bm{\lambda}}\big|\pi_{k\alpha}}=2\braket{\partial_{i}^{0}\psi_{n\bm{\lambda}}\big|\pi_{k\alpha}}.
\end{equation}
Therefore according to Eq. (\ref{eq:sat-null-explicit-0}), we arrive
at 
\begin{equation}
\braket{\partial_{i}^{0}\psi_{n\bm{\lambda}}\big|\pi_{k\alpha}}=\eta_{ij}^{k}\braket{\partial_{j}^{0}\psi_{n\bm{\lambda}}\big|\pi_{k\alpha}}\label{eq:sat-null-normalized}
\end{equation}
For a null operator $\Pi_{k}=\sum_{\alpha}q_{k\alpha}\ket{\pi_{k\alpha}}\bra{\pi_{k\alpha}}$,
due to Lemma \ref{lem:kerPi}, we have $\braket{\psi_{n\bm{\lambda}}\big|\pi_{k\alpha}}=0,\,\forall n,\alpha$.
This implies
\begin{equation}
\braket{\partial_{i}^{0}\psi_{n\bm{\lambda}}\big|\pi_{k\alpha}}=\braket{\partial_{i}\psi_{n\bm{\lambda}}\big|\pi_{k\alpha}}
\end{equation}
\begin{equation}
\braket{\partial_{i}\tilde{\psi}_{n\bm{\lambda}}\big|\pi_{k\alpha}}=\braket{\partial_{i}\psi_{n\bm{\lambda}}\big|\pi_{k\alpha}}\braket{\tilde{\psi}_{n\bm{\lambda}}\big|\tilde{\psi}_{n\bm{\lambda}}},
\end{equation}
where $\ket{\tilde{\psi}_{n\bm{\lambda}}}$ is an unnormalized state.
Upon substituting the above two equations into Eq.~(\ref{eq:sat-null-normalized}),
one can conclude the proof easily. \end{proof}

We emphasize that at the critical point of the change of the rank
of $\rho_{\bm{\lambda}}$, where there exists $\ket{\tilde{\psi}_{n\bm{\lambda}}}$
that locally vanishes at $\bm{\lambda}$ and therefore $\ket{\psi_{n\bm{\lambda}}}$
is not well-defined. While Eq.~(\ref{eq:null-explicit}) is still
valid in this case, Eq.~(\ref{eq:reg-explicit}) should be understood
in the sense of taking the limit $\bm{\lambda}^{\prime}\to\bm{\lambda}$
on both sides. We will illustrate this issue in the example of superresolution
subsequently, but will not delve into rigorous mathematical discussion
on the removable discontinuity of the QFIM at the critical point \citep{vsafranek2017discontinuities,rezakhani2015continuity}.
Note that $\eta_{ii}^{k}=1$, thus for a null projector in single
parameter estimation Eq.~(\ref{eq:null-explicit}) is satisfied automatically.
Moreover, $\eta_{ij}^{k}=1/\eta_{ji}^{k}$. Thus in order to check
whether a null projector satisfies Theorem \ref{thm:null-explicit},
one only needs to verify whether the upper or lower (excluding the
diagonal) matrix elements of $\eta^{k}$ are real and independent
of $n$ and $\alpha$.

\subsection{\label{subsec:pezz}Recovering the results by Pezz\`e et al. \citep{pezz`e2017optimal}}

Pezz\`e et al. \citep{pezz`e2017optimal} obtained the necessary
and sufficient conditions for a projective measurement consisting
of rank one projectors to saturate the Helstrom CR bound for pure
states. Here we recover their results through Theorems \ref{thm:reg-explicit}
and \ref{thm:null-explicit}. For a pure state $\rho_{\bm{\lambda}}=\ket{\psi_{\bm{\lambda}}}\bra{\psi_{\bm{\lambda}}}$,
the dimension of $\text{supp}(\rho_{\bm{\lambda}})$ is one. Therefore
$\xi_{i}^{k}$'s naturally do not depend on the index $n$. Furthermore,
for a rank one projector, we suppress the subscript $\alpha$ in the
basis vector $\ket{\pi_{k\alpha}}$ and observe that $\xi_{i}^{k}$'s
naturally do not $\alpha$ either. To satisfy Theorem \ref{thm:reg-explicit},
we only require the coefficients $\xi_{i}^{k}$ be real. The SLD for
a pure state is $L_{i\bm{\lambda}}=2(\ket{\partial_{i}\psi_{\bm{\lambda}}}\bra{\psi_{\bm{\lambda}}}+\ket{\psi_{\bm{\lambda}}}\bra{\partial_{i}\psi_{\bm{\lambda}}})$
\citep{braunstein_statistical_1994,paris_quantum_2009}, from which
we find $L_{i\bm{\lambda}}^{\perp}=0$. Multiplying both sides of
Eq.~(\ref{eq:reg-explicit}) by $\braket{\pi_{k}\big|\psi_{\bm{\lambda}}}$,
the only requirement that $\xi_{i}^{k}$ be real gives 
\begin{equation}
\text{Im}[\braket{\partial_{i}^{0}\psi_{\bm{\lambda}}\big|\pi_{k}}\braket{\pi_{k}\big|\psi_{\bm{\lambda}}}]=0.
\end{equation}
This equation is equivalent as the Eq.~(8) in Pezz\`e et al. \citep{pezz`e2017optimal}
and is a generalization of Eq.~(29) in Braunstein and Caves \citep{braunstein_statistical_1994}.
Similarly, taking $\ket{\tilde{\psi}_{n\bm{\lambda}}}$ as $\ket{\psi_{\bm{\lambda}}}$
and multiplying both sides of Eq.~(\ref{eq:null-explicit}) by $\braket{\pi_{k}\big|\partial_{j}\psi_{\bm{\lambda}}}$,
the only requirement that $\eta_{ij}^{k}$ be real gives 
\begin{equation}
\text{Im}[\braket{\partial_{i}\psi_{\bm{\lambda}}\big|\pi_{k}}\braket{\pi_{k}\big|\partial_{j}\psi_{\bm{\lambda}}}]=0,
\end{equation}
which recovers Eq. (7) of Pezz\`e et al. \citep{pezz`e2017optimal}.

\section{Application to the three-dimensional imaging of two incoherent optical
point sources\label{sec:superresolution}}

Let us now apply the above theorems to estimate the three-dimensional
separation of two \textit{incoherent} point sources of monochromatic
light. Fig. \ref{fig:Setup} shows the basic setup of the problem:
The longitudinal axis ($Z$ axis) is taken to be the direction of
light propagation. We assume the coordinates of the centroid of the
two sources is known and chosen as the origin. The coordinates of
the two sources are $\pm\bm{s}\equiv\pm(s_{1},\,s_{2},\,s{}_{3})$
respectively. The transverse coordinates are denoted as $\bm{s}_{\perp}\equiv(s_{1},\,s_{2})$
and the dimensionless coordinates at the pupil plane are denoted as
$\bm{r}=(x_{1},\,x_{2})$ respectively. We consider the one photon
mixed state \textit{$\rho_{\bm{s}}=1/2\ket{\Psi_{+\bm{s}}}\bra{\Psi_{+\bm{s}}}+1/2\ket{\Psi_{-\bm{s}}}\bra{\Psi_{-\bm{s}}}$,
}where $\ket{\Psi_{\pm\bm{s}}}\equiv e^{i\theta_{\pm\bm{s}}}\ket{\Phi_{\pm\bm{s}}}$.
The pupil function is $\Phi_{\bm{s}}(\bm{r})\equiv\braket{\bm{r}\big|\Phi_{\bm{s}}}=\mathcal{A}\text{circ}(r/a)$$\allowbreak\exp[ik(\bm{s}_{\perp}\cdot\bm{r}-s_{3}r^{2}/2)]$
\citep{backlund2018fundamental,richards1959electromagnetic} , where
the normalization constant $\mathcal{A}=1/(\sqrt{\pi}a)$, $\text{circ}(r/a)$
is one if $0\le r\le a$ and vanishes everywhere else, and $r=\sqrt{x_{1}^{2}+x_{2}^{2}}$.
The overall phase $\theta_{\bm{s}}$ is chosen such that $\Delta_{\bm{s}}\equiv e^{2i\theta_{\bm{s}}}\int d\bm{r}\Phi_{\bm{s}}^{2}(\bm{r})$
is real. Due to Eq. (\ref{eq:theta-odd}), we find $\braket{\bm{r}\big|\Psi_{-\bm{s}}}\equiv e^{-i\theta_{\bm{s}}}\Phi_{-\bm{s}}(\bm{r})$
and $\braket{\Psi_{-\bm{s}}\big|\Psi_{+\bm{s}}}=\Delta_{\bm{s}}$
is also real. With this observation, we can diagonalize $\rho_{\bm{s}}$
with the states $\ket{\psi_{1\bm{s}}}=\ket{\tilde{\psi}_{1\bm{\lambda}}}/\sqrt{4p_{1\bm{s}}}$
and $\ket{\psi_{2\bm{s}}}=\ket{\tilde{\psi}_{2\bm{s}}}/\sqrt{4p_{2\bm{s}}}$,
where $\ket{\tilde{\psi}_{1\bm{s}}}=\ket{\Psi_{+\bm{s}}}+\ket{\Psi_{-\bm{s}}}$,
$\ket{\tilde{\psi}_{2\bm{s}}}=-\text{i}(\ket{\Psi_{+\bm{s}}}-\ket{\Psi_{-\bm{s}}})$
and the corresponding eigenvalues $p_{1,2\bm{s}}=(1\pm\Delta_{\bm{s}})/2$.
The QFIM associated with $\rho_{\bm{s}}$ has been shown in Ref. \citep{yu2018quantum},
which is $I_{ij}=4\text{Re}[\braket{\partial_{i}\Phi_{\bm{s}}\big|\partial_{j}\Phi_{\bm{s}}}+\allowbreak\braket{\Phi_{\bm{s}}\big|\partial_{i}\Phi_{\bm{s}}}\braket{\Phi_{\bm{s}}\big|\partial_{j}\Phi_{\bm{s}}}]$.
A straightforward calculation shows that the QFIM is diagonal with
diagonal matrix elements $k^{2}a^{2}$, $k^{2}a^{2}$ and $k^{2}a^{4}/12$.
We will focus on the saturation of the QFIM subsequently and construct
the corresponding optimal measurements.

Since now we have successfully diagonalized $\rho_{\bm{s}}$, we can
apply Theorems \ref{thm:reg-explicit}-\ref{thm:null-explicit} to
this problem to obtain the necessary and sufficient conditions for
optimal measurements. We summarize the results as two corollaries
below. The proofs can be found in Appendices \ref{subsec:s0} and
\ref{subsec:sperp0}. Note that our approach to optimal measurements
is quite different from the approach of direct calculations by many
papers \citep{tsang2016quantum,ang2017quantum,Rehacek-17-OL,yu2018quantum},
where one needs to calculate the QFIM first and then check whether
the CFIM associated with a specific measurement coincides with the
QFIM. 

\begin{cor}\label{cor:s-zero}The matrix bound of CFIM corresponding
to a projector $\Pi_{k}=\sum_{\alpha}\ket{\pi_{k\alpha}}\bra{\pi_{k\alpha}}$
can be saturated locally at in the limit $\bm{s}\to0$ \footnote{When $\bm{s}=0$, $\rho_{\bm{s}}$ becomes locally pure, which is
quite different from globally pure states discussed in \citep{pezz`e2017optimal}.
This is the critical point where the rank of $\rho_{\bm{s}}$ changes.
To find the saturation conditions, one should explore Eqs.~(\ref{eq:reg-explicit})
in the limit $\bm{s}\to0$. See Appendix \ref{subsec:s0} for the
proof.}, if and only if 
\begin{equation}
\braket{\partial_{i}^{0}\Phi_{\bm{s}}\big|\pi_{k\alpha}}\big|_{\bm{s}=0}=0\,\forall i,\,\alpha,
\end{equation}
provided the projector is \textit{regular} and if and only if 
\begin{equation}
\braket{\partial_{i}\Phi_{\bm{s}}\big|\pi_{k\alpha}}\big|_{\bm{s}=0}=\eta_{ij}^{k}\braket{\partial_{j}\Phi_{\bm{s}}\big|\pi_{k\alpha}}\big|_{\bm{s}=0}\,\forall i,\,j,\,\alpha,
\end{equation}
provided the projector is \textit{null}, where $\xi_{i}^{k}$ and
$\eta_{ij}^{k}$ are real and independent of $\alpha$.\end{cor}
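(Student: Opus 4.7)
The plan is to specialize Theorems \ref{thm:reg-explicit} and \ref{thm:null-explicit} to the explicit diagonalization $\rho_{\bm{s}} = p_{1\bm{s}}\ket{\psi_{1\bm{s}}}\bra{\psi_{1\bm{s}}} + p_{2\bm{s}}\ket{\psi_{2\bm{s}}}\bra{\psi_{2\bm{s}}}$ and then to carry out the limit $\bm{s}\to 0$. At $\bm{s}=0$ the rank of $\rho_{\bm{s}}$ drops from two to one, $p_{2\bm{s}}\to 0$, and the direction of $\ket{\psi_{2\bm{s}}}$ becomes ambiguous; the right objects to work with are therefore the unnormalized eigenvectors $\ket{\tilde{\psi}_{n\bm{s}}}$. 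We have $\ket{\tilde{\psi}_{1\bm{s}}}\to 2\ket{\Phi_{0}}$, whereas $\ket{\tilde{\psi}_{2\bm{s}}}=-\text{i}(\ket{\Psi_{+\bm{s}}}-\ket{\Psi_{-\bm{s}}})=O(\bm{s})$, so that $\ket{\partial_{i}\tilde{\psi}_{2\bm{s}}}\big|_{\bm{s}=0}$ lies in the subspace spanned by the derivatives $\ket{\partial_{i}\Phi_{\bm{s}}}\big|_{\bm{s}=0}$. This is precisely the origin of the removable discontinuity of the QFIM at the critical point.

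For the regular case I would apply Eq.~(\ref{eq:reg-explicit}) separately to $n=1$ and $n=2$. Because $\rho_{0}=\ket{\Phi_{0}}\bra{\Phi_{0}}$ is locally pure, the pure-state reduction reviewed in Subsec.~\ref{subsec:pezz} gives $L_{i\bm{s}}^{\perp}\big|_{\bm{s}=0}\ket{\Phi_{0}}=0$, and the $n=1$ equation collapses to $2\braket{\partial_{i}^{0}\Phi_{0}\big|\pi_{k\alpha}}=\xi_{i}^{k}\braket{\Phi_{0}\big|\pi_{k\alpha}}$. For $n=2$ I would rescale Eq.~(\ref{eq:reg-explicit}) by $\sqrt{p_{2\bm{s}}}$ to re-express it in terms of $\ket{\tilde{\psi}_{2\bm{s}}}$, and evaluate the leading-order behavior in $\bm{s}$. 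Since $\xi_{i}^{k}$ must be simultaneously independent of $n$ and of $\alpha$, the only way the $n=1$ and $n=2$ limits can be consistent for every regular projector is that the inhomogeneous term in the $n=1$ equation vanishes, i.e.\ $\braket{\partial_{i}^{0}\Phi_{\bm{s}}\big|\pi_{k\alpha}}\big|_{\bm{s}=0}=0$ for all $i$ and $\alpha$.

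For the null case I would apply Theorem \ref{thm:null-explicit} directly, which is already phrased in terms of the $\ket{\tilde{\psi}_{n\bm{s}}}$ and is therefore continuous through $\bm{s}=0$. Using $\partial_{i}\ket{\Psi_{\pm\bm{s}}}\big|_{\bm{s}=0}=\pm\partial_{i}\ket{\Phi_{\bm{s}}}\big|_{\bm{s}=0}$ (the overall phase $\theta_{\bm{s}}$ drops out because $\braket{\Phi_{0}\big|\pi_{k\alpha}}=0$ for a null projector by Lemma \ref{lem:kerPi}), both the $n=1$ and $n=2$ instances of Eq.~(\ref{eq:null-explicit}) collapse to the single condition $\braket{\partial_{i}\Phi_{\bm{s}}\big|\pi_{k\alpha}}\big|_{\bm{s}=0}=\eta_{ij}^{k}\braket{\partial_{j}\Phi_{\bm{s}}\big|\pi_{k\alpha}}\big|_{\bm{s}=0}$ with $\eta_{ij}^{k}$ real and $\alpha$-independent, matching the stated claim.

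The main obstacle is the rank-change singularity at $\bm{s}=0$: naively both $\ket{\psi_{2\bm{s}}}$ and the proportionality constant $\xi_{i}^{k}$ are ill-defined, and one must verify that, after rescaling by the appropriate power of $\sqrt{p_{2\bm{s}}}$, the relevant limits exist and are independent of the direction along which $\bm{s}\to 0$. Once this singular-perturbation step is carried out, the remainder is a direct substitution using the explicit form of $\Phi_{\bm{s}}(\bm{r})$ and the pure-state reductions already established.
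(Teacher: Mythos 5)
Your overall strategy --- specializing Theorems \ref{thm:reg-explicit} and \ref{thm:null-explicit} to the two-state diagonalization, working with the unnormalized eigenvectors $\ket{\tilde{\psi}_{n\bm{s}}}$, and rescaling the vanishing branch by $\sqrt{p_{2\bm{s}}}$ before taking $\bm{s}\to 0$ --- is the same route the paper takes, and your null case is essentially right (the $n=1$ instance is trivially satisfied and the phase derivative $\partial_{i}\theta_{\bm{s}}$ drops out of the $n=2$ instance because $\braket{\Phi_{\bm{s}}\big|\pi_{k\alpha}}\big|_{\bm{s}=0}=0$). The regular case, however, contains a concrete error. The $n=1$ equation does not collapse to $2\braket{\partial_{i}^{0}\Phi_{\bm{s}}\big|\pi_{k\alpha}}=\xi_{i}^{k}\braket{\Phi_{\bm{s}}\big|\pi_{k\alpha}}$ at $\bm{s}=0$: the derivative appearing there is that of the eigenvector $\ket{\psi_{1\bm{s}}}\propto\ket{\Psi_{+\bm{s}}}+\ket{\Psi_{-\bm{s}}}$, an even combination, and Eq.~(\ref{eq:psiP-expression}) together with $\partial_{i}p_{1\bm{s}}\big|_{\bm{s}=0}=0$ gives $\ket{\partial_{i}\psi_{1\bm{s}}}\big|_{\bm{s}=0}=0$, \emph{not} $\ket{\partial_{i}\Phi_{\bm{s}}}\big|_{\bm{s}=0}$. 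Hence the $n=1$ equation reads $0=\xi_{i}^{k}\braket{\Phi_{\bm{s}}\big|\pi_{k\alpha}}\big|_{\bm{s}=0}$; its only content is $\xi_{i}^{k}=0$, and it imposes no constraint of the form $\braket{\partial_{i}^{0}\Phi_{\bm{s}}\big|\pi_{k\alpha}}\big|_{\bm{s}=0}=0$ on the projector. (This is exactly the pitfall the footnote to the corollary warns about: the locally pure state at the rank-change point cannot be treated by the globally-pure reduction of Subsec.~\ref{subsec:pezz}.)

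The stated condition in fact comes entirely from the $n=2$ branch, which you set up but never evaluate. Multiplying Eq.~(\ref{eq:Lemma1-cond2}) by $\sqrt{p_{2\bm{s}}}$, using $\ket{\tilde{\psi}_{2\bm{s}}}\big|_{\bm{s}=0}=0$ and $\xi_{i}^{k}=0$, the surviving requirement is $\braket{\partial_{i}\tilde{\psi}_{2\bm{s}}\big|\pi_{k\alpha}}\big|_{\bm{s}=0}=0$. The step where the kernel projection $\partial_{i}^{0}$ actually originates is the computation
\begin{equation}
\ket{\partial_{i}\tilde{\psi}_{2\bm{s}}}\big|_{\bm{s}=0}=2\text{i}\left(\ket{\Phi_{\bm{s}}}\braket{\Phi_{\bm{s}}\big|\partial_{i}\Phi_{\bm{s}}}-\ket{\partial_{i}\Phi_{\bm{s}}}\right)\big|_{\bm{s}=0}=-2\text{i}\ket{\partial_{i}^{0}\Phi_{\bm{s}}}\big|_{\bm{s}=0},
\end{equation}
which requires the identity $\partial_{i}\theta_{\bm{s}}\big|_{\bm{s}=0}=\text{i}\braket{\Phi_{\bm{s}}\big|\partial_{i}\Phi_{\bm{s}}}\big|_{\bm{s}=0}$ obtained by differentiating the definition of $\Delta_{\bm{s}}$. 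For a regular projector $\braket{\Phi_{\bm{s}}\big|\pi_{k\alpha}}\big|_{\bm{s}=0}$ need not vanish, so the phase-derivative term does not drop out; it is precisely what converts $\ket{\partial_{i}\Phi_{\bm{s}}}$ into $\ket{\partial_{i}^{0}\Phi_{\bm{s}}}$. Without this step your argument yields at best a Pezz\`e-type reality condition from the $n=1$ branch and leaves both necessity and sufficiency of $\braket{\partial_{i}^{0}\Phi_{\bm{s}}\big|\pi_{k\alpha}}\big|_{\bm{s}=0}=0$ unestablished.
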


\begin{cor}\label{cor:sperp-zero}On the line $\bm{s}_{\perp}=0$
the matrix bound of CFIM of estimating the \textit{transverse} separation
corresponding to a projector $\Pi_{k}=\sum_{\alpha}\ket{\pi_{k\alpha}}\bra{\pi_{k\alpha}}$
can be saturated locally, if and only if 
\begin{equation}
\braket{\partial_{i}\tilde{\psi}_{n\bm{s}}\big|\pi_{k\alpha}}=\xi_{i}^{k}\braket{\tilde{\psi}_{n\bm{s}}\big|\pi_{k\alpha}},\,i=1,\,2,\,\forall n,\,\alpha,\label{eq:diPsi-reg-Pi}
\end{equation}
provided the projector is \textit{regular}, and if and only if 
\begin{equation}
\braket{\partial_{i}\tilde{\psi}_{n\bm{s}}\big|\pi_{k\alpha}}=\eta_{ij}^{k}\braket{\partial_{j}\tilde{\psi}_{n\bm{s}}\big|\pi_{k\alpha}},\,i,\,j=1,\,2,\,\forall n,\,\alpha,\label{eq:diPsi-null-Pi}
\end{equation}
provided the projector is \textit{null}, where $\xi_{i}^{k}$ and
$\eta_{ij}^{k}$ are real and independent of $n$ and $\alpha$.\end{cor}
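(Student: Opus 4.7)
The plan is to specialize Theorems~\ref{thm:reg-explicit} and~\ref{thm:null-explicit} to the two-incoherent-source problem and exploit the rotational symmetry of the pupil function along the longitudinal axis. The null case is essentially free: Theorem~\ref{thm:null-explicit} is already stated in terms of unnormalized eigenvectors, so substituting $\ket{\tilde\psi_{1\bm{s}}}=\ket{\Psi_{+\bm{s}}}+\ket{\Psi_{-\bm{s}}}$ and $\ket{\tilde\psi_{2\bm{s}}}=-\text{i}(\ket{\Psi_{+\bm{s}}}-\ket{\Psi_{-\bm{s}}})$ and restricting the indices $i,j$ to $\{1,2\}$ delivers Eq.~(\ref{eq:diPsi-null-Pi}) at once.

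The real content is for the regular case, where Theorem~\ref{thm:reg-explicit} gives $\braket{\psi_{n\bm{s}}|L_{i\bm{s}}^{\perp}|\pi_{k\alpha}}+2\braket{\partial_{i}^{0}\psi_{n\bm{s}}|\pi_{k\alpha}}=\xi_{i}^{k}\braket{\psi_{n\bm{s}}|\pi_{k\alpha}}$, and I must show that on the line $\bm{s}_{\perp}=0$ and for $i=1,2$ this reduces to Eq.~(\ref{eq:diPsi-reg-Pi}). The key observation is that $\Phi_{\pm\bm{s}}(\bm{r})|_{\bm{s}_{\perp}=0}$ is a purely radial function of $\bm{r}$, so any pupil integral containing an odd factor of $x_{1}$ or $x_{2}$ vanishes by angular integration. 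From this I would extract three facts in order: (a) $\partial_{i}\Delta_{\bm{s}}|_{\bm{s}_{\perp}=0}=0$, which forces both $\partial_{i}\theta_{\bm{s}}|_{\bm{s}_{\perp}=0}=0$ (since $\theta_{\bm{s}}=-\tfrac{1}{2}\arg\int d\bm{r}\,\Phi_{\bm{s}}^{2}$) and $\partial_{i}p_{n\bm{s}}|_{\bm{s}_{\perp}=0}=0$; (b) $\partial_{i}\Psi_{\pm\bm{s}}(\bm{r})|_{\bm{s}_{\perp}=0}=\pm\text{i}kx_{i}\Psi_{\pm\bm{s}}(\bm{r})$, so that $\ket{\partial_{i}\tilde\psi_{n\bm{s}}}$ is, up to a sign, multiplication by $kx_{i}$ acting on the other eigenvector; and (c) $\braket{\tilde\psi_{m\bm{s}}|x_{i}|\tilde\psi_{n\bm{s}}}|_{\bm{s}_{\perp}=0}=0$ for all $m,n$, again by the odd-integrand argument, so that $\ket{\partial_{i}\tilde\psi_{n\bm{s}}}$ sits entirely inside $\text{ker}(\rho_{\bm{s}})$ on the line.

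Consequences (a) and (b) together imply $\ket{\partial_{i}\psi_{n\bm{s}}}=\ket{\partial_{i}\tilde\psi_{n\bm{s}}}/\sqrt{4p_{n\bm{s}}}$ on the line (no correction from $\partial_{i}\sqrt{4p_{n\bm{s}}}$), and (c) then yields $\ket{\partial_{i}^{0}\psi_{n\bm{s}}}=\ket{\partial_{i}\psi_{n\bm{s}}}$. Writing the support--support matrix elements of the SLD as $(L_{i\bm{s}})_{mn}=\tfrac{2}{p_{m\bm{s}}+p_{n\bm{s}}}\bigl[\delta_{mn}\partial_{i}p_{n\bm{s}}-p_{n\bm{s}}\braket{\partial_{i}\psi_{m\bm{s}}|\psi_{n\bm{s}}}-p_{m\bm{s}}\braket{\psi_{m\bm{s}}|\partial_{i}\psi_{n\bm{s}}}\bigr]$ and invoking (a) together with (c) shows that every such matrix element vanishes, hence $L_{i\bm{s}}^{\perp}=0$ on the line for $i=1,2$. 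Plugging both reductions into Theorem~\ref{thm:reg-explicit} and absorbing a factor of $2$ into $\xi_{i}^{k}$ produces exactly Eq.~(\ref{eq:diPsi-reg-Pi}).

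The main obstacle is pinning down the vanishing of $\partial_{i}\theta_{\bm{s}}$ and $\partial_{i}p_{n\bm{s}}$ at $\bm{s}_{\perp}=0$ cleanly; it rests on the fact that the leading $\bm{s}_{\perp}$-dependence of $\int d\bm{r}\,\Phi_{\bm{s}}^{2}$ is quadratic rather than linear, so that a principal branch of its argument is smooth and its gradient vanishes there. Once this gradient vanishing is secured, the reduction of Theorem~\ref{thm:reg-explicit} to Eq.~(\ref{eq:diPsi-reg-Pi}) and the direct appeal to Theorem~\ref{thm:null-explicit} for Eq.~(\ref{eq:diPsi-null-Pi}) are essentially bookkeeping with the normalization factors $\sqrt{4p_{n\bm{s}}}$.
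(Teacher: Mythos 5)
Your proposal is correct and follows essentially the same route as the paper's Appendix B.3: both specialize Theorems \ref{thm:reg-explicit} and \ref{thm:null-explicit} using $\partial_i v_{\bm{s}}=\partial_i w_{\bm{s}}=0$ at $\bm{s}_\perp=0$ (hence $\partial_i\theta_{\bm{s}}=\partial_i\Delta_{\bm{s}}=\partial_i p_{n\bm{s}}=0$ for $i=1,2$) together with the fact that $\ket{\partial_i\tilde\psi_{n\bm{s}}}\propto x_i\ket{\tilde\psi_{\bar{n}\bm{s}}}$ lies in $\mathrm{ker}(\rho_{\bm{s}})$, so that $L_{i\bm{s}}^\perp$ drops out and the normalization factors $\sqrt{4p_{n\bm{s}}}$ cancel. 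The only differences are organizational (the paper routes the regular case through an explicit intermediate lemma for $L_{i\bm{s}}^\perp\ket{\pi_{k\alpha}}$ and substitutes $\braket{\partial_i\psi_{1\bm{s}}|\psi_{2\bm{s}}}|_{\bm{s}_\perp=0}=0$ term by term rather than arguing $L_{i\bm{s}}^\perp=0$ as an operator), plus the minor phrasing point that $\partial_i\theta_{\bm{s}}=0$ follows from the vanishing gradient of $\int d\bm{r}\,\Phi_{\bm{s}}^2(\bm{r})$ rather than from $\partial_i\Delta_{\bm{s}}=0$ alone, which your closing paragraph already corrects.
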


Based on Corollary \ref{cor:sperp-zero}, we propose the following
\textit{recipe} of searching for the optimal measurements: (i) Identify
the regular and null basis vectors in a given complete and orthonormal
basis $\{\ket{\pi_{k\alpha}}\}$. (ii) For each regular basis vector
$\ket{\pi_{k\alpha}}$, calculate the coefficient $\xi_{i}^{k\alpha}$
defined in Eq.~(\ref{eq:diPsi-reg-Pi}) and check whether $\xi_{i}^{k\alpha}$'s
are real for each $i$ and independent of the index $n$. (iii) Assemble
regular basis vectors that have the same coefficient $\xi_{i}^{k\alpha}$
as a regular projector $\Pi_{k}=\sum_{\alpha}\ket{\pi_{k\alpha}}\bra{\pi_{k\alpha}}$.
(iv) A null basis vector $\ket{\pi_{k\alpha}}$ is\textit{ flexible}
if $\braket{\partial_{i}\tilde{\psi}_{n\bm{s}}\big|\pi_{k\alpha}}=0$
for all $n$ and $i$. The rank one projector $\Pi_{k\alpha}$ formed
by a flexible basis vector can be added to any of the previous regular
projectors or the following null projectors. (v) For a null basis
vector that is not flexible, calculate the upper or lower triangular
(excluding diagonal) matrix elements $\eta_{ij}^{k\alpha}$ defined
in Eq.~(\ref{eq:diPsi-null-Pi}) and check whether they are all real
and independent of the index $n$. If for $n=1,\,2,$ both $\braket{\partial_{i}\tilde{\psi}_{n\bm{s}}\big|\pi_{k\alpha}}$
and $\braket{\partial_{j}\tilde{\psi}_{n\bm{s}}\big|\pi_{k\alpha}}$
vanishes for some $i$ and $j$, $\eta_{ij}^{k\alpha}$ can be set
arbitrarily. (vi) Assemble null basis vectors that have the same $\eta$
matrix as a null projector $\Pi_{k}=\ket{\pi_{k\alpha}}\bra{\pi_{k\alpha}}$.
A similar recipe can also be constructed based on Corollary \ref{cor:s-zero}.
It is clear from Theorems \ref{thm:reg-explicit}-\ref{thm:null-explicit}
that any partition of a set of optimal projectors is also optimal.
However, from the experimental point of view, we would like to minimize
the number of projectors for an optimal measurement. 

For the case of $\bm{s}=0$, we consider the Zernike basis vectors
denoted as $\ket{Z_{n}^{m}}$ \citep{born1999principles}, where $\ket{Z_{0}^{0}}=\ket{\Phi_{\bm{s}}}\big|_{\bm{s=0}}$.
Following the recipe above (details can be found in Appendix \ref{subsec:s0}):
(i) $\ket{Z_{0}^{0}}$ is the only regular basis vector and the remaining
basis vectors are null. (ii) We find $\braket{\partial_{i}^{0}\Phi_{\bm{s}}\big|Z_{0}^{0}}\big|_{\bm{s}=0}=0$
for $i=1,\,2,\,3$. (iii) Thus we obtain a regular projector $\ket{Z_{0}^{0}}\bra{Z_{0}^{0}}$.
(iv) For null basis vectors $\ket{Z_{n}^{m}}$ with $(n,\,m)\neq(1,\pm1),\,(2,\,0)$,
we find $\braket{\partial_{i}\Phi_{\bm{s}}\big|Z_{n}^{m}}=0$ for
all $i$. Thus these basis vectors are flexible and can be lumped
to the previous regular projector to form a new regular projector
$\Pi_{1}$. (v-vi) We then calculate the $\eta$ matrices corresponding
to the null basis vectors $\ket{Z_{1}^{\pm1}}$ and $\ket{Z_{2}^{0}}$
and find they are all distinct. Therefore we obtain three more null
projectors $\Pi_{2}=\ket{Z_{1}^{1}}\bra{Z_{1}^{1}}$, $\Pi_{3}=\ket{Z_{1}^{-1}}\bra{Z_{1}^{-1}}$,
and $\Pi_{4}=\ket{Z_{2}^{0}}\bra{Z_{2}^{0}}$. We conclude that the
projectors $\{\Pi_{k}\}_{k=1}^{4}$ are the optimal measurement in
the limit $\bm{s}\to0$.

\begin{figure}
\begin{centering}
\includegraphics[scale=0.45]{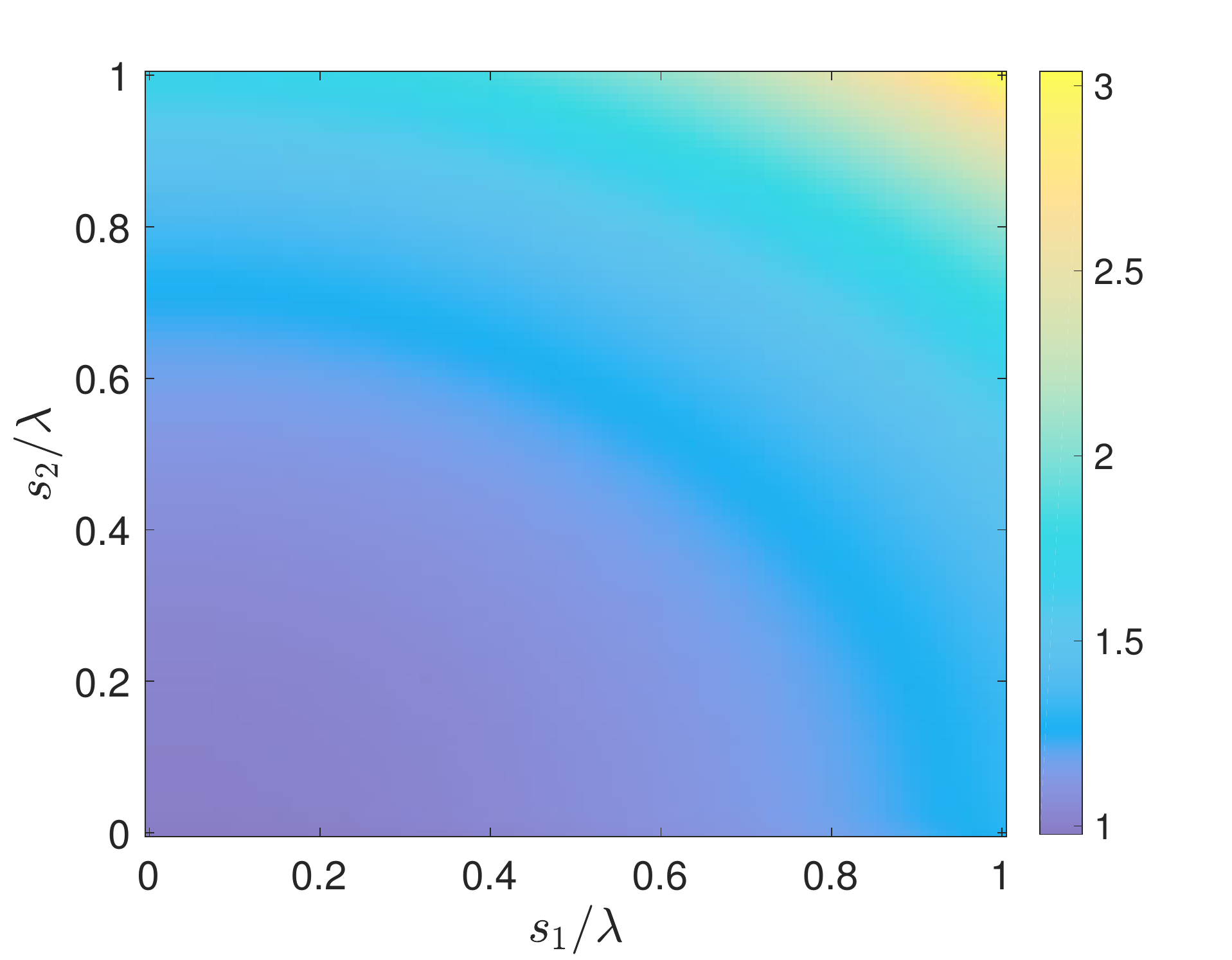}
\par\end{centering}
\centering{}\caption{\label{fig:Numerical-simulations}Numerical simulation of the classical
Cram\'er-Rao bound associated with the optimal measurement: $\Pi_{1}=\sum_{n=0}^{\infty}\ket{Z_{2n+1}^{1}}\bra{Z_{2n+1}^{1}}$,
$\Pi_{2}=\sum_{n=0}^{\infty}\ket{Z_{2n+1}^{-1}}\bra{Z_{2n+1}^{-1}}$
and $\Pi_{3}=1-\Pi_{1}-\Pi_{2}$. Note that the QFIM of estimating
the transverse separation is diagonal with both diagonal matrix elements
$k^{2}a^{2}$. The parameter setting is $a=0.2$, $\lambda=1$, $k=2\pi/\lambda$,
$s_{3}=5\lambda$. The plotted quantity is $k^{2}a^{2}(F^{-1})_{11}$,
where $F$ denotes the CFIM. As we can see, near the origin the quantum
Cram\'er-Rao bound of estimating $s_{1}$ is saturated. }
\end{figure}

On the line $\bm{s}_{\perp}=0$, we are interested in estimating the
transverse separation and therefore set $i=1,\,2$. After some algebra,
it is readily shown that $\tilde{\psi}_{1,2\bm{s}}(\bm{r})\big|_{\bm{s}_{\perp}=0}$
are even and $\partial_{i}\tilde{\psi}_{1,2\bm{s}}(\bm{r})\big|_{\bm{s}_{\perp}=0}$
are odd. We still consider Zernike basis vectors $\ket{Z_{n}^{m}}$.
Following the previously proposed recipe (details can be found in
Appendix \ref{subsec:sperp0}: (i) Even basis vectors are either regular
or flexible. Odd basis vectors are null and they are also flexible
except for $m=\pm1$. (ii) For regular and even basis vectors $\ket{Z_{2n}^{2m}}$,
it is easily calculated that $\xi_{i}^{(2n,2m)}=0$ for all $i$ and
$\ket{\tilde{\psi}_{1,\,2\bm{s}}}$. (iii) Thus we can construct a
regular projector as a sum of the rank one projectors formed by all
the regular and even basis vectors. (iv) We add all the rank one projectors
formed by flexible basis vectors to the previous regular projector
to obtain a regular projector $\Pi_{1}=1-\sum_{n=0}^{\infty}\ket{Z_{2n+1}^{\pm1}}\bra{Z_{2n+1}^{\pm1}}$.
(v) For the remaining null basis vectors, where $m=\pm1$, we find
$\eta_{21}^{(2n+1,\,1)}=0$ and $\eta_{12}^{(2n+1,\,-1)}=0$ for $\ket{\psi_{1,\,2\bm{s}}}$.
(vi) Since the set $\{\ket{Z_{2n+1}^{1}}\}$ has the same $\eta$
matrix and so does the set $\{\ket{Z_{2n+1}^{-1}}\}$, we obtain two
null projectors $\Pi_{2}=\sum_{n=0}^{\infty}\ket{Z_{2n+1}^{1}}\bra{Z_{2n+1}^{1}}$,
$\Pi_{3}=\sum_{n=0}^{\infty}\ket{Z_{2n+1}^{-1}}\bra{Z_{2n+1}^{-1}}$.
Note that these optimal projectors are independent of functional form
of the radial parts of the Zernike basis functions due to the fact
that the radial parts for a fixed angular index $m$ are complete
in the radial subspace. In fact, for a state $\braket{\bm{r}\big|\psi}=\psi(r,\,\phi)$,
one can show that $\braket{\psi\big|\Pi_{2}\big|\psi}\allowbreak$$=1/\pi\int_{0}^{\infty}rdr\big|\int_{0}^{2\pi}d\phi\psi(r,\,\phi)\cos\phi\big|^{2}$
and $\braket{\psi\big|\Pi_{3}\big|\psi}\allowbreak$$=1/\pi\int_{0}^{\infty}rdr\big|\int_{0}^{2\pi}d\phi\psi(r,\,\phi)\sin\phi\big|^{2}$,
where one can explicitly see that the probabilities do not depend
on the functional form of the radial parts of the basis functions.
Furthermore the probability distribution corresponding to such a measurement
is insensitive to the small change in the longitudinal separation.
Thus one cannot extract any information about $s_{3}$ from this measurement.
Fig.~\ref{fig:Numerical-simulations} is the numerical calculation
of classical CR bound of estimating $s_{1}$ associated with this
measurement. As we clearly see from Fig.~\ref{fig:Numerical-simulations},
the Helstrom CR bound of estimating $s_{1}$ is saturated near the
origin where $\bm{s}_{\perp}=0$. Note that the Helstrom CR bound
of estimating $s_{2}$ is the same as that of estimating $s_{1}$
and hence is omitted here.

On the plane $s_{3}=0$, for the case of $i=1,\,2$, i.e., estimating
the transverse separation, following the previous recipe (see Appendix
\ref{subsec:s30} for details) we find rank one projectors formed
by real and parity definite basis functions are optimal on the plane
$s_{3}=0$. This result is a generalization of previous one-dimensional
transverse estimation \citep{Rehacek-17-OL}.

\section{Conclusion\label{sec:Conclusion}}

We gave the necessary and sufficient conditions for any POVM measurement
to give the Helstrom CR bound. Based on these saturation conditions,
we predicted several local optimal measurements in the problem of
estimating the three-dimensional separation of two incoherent light
sources. These predictions are confirmed by numerical simulations.

Based on our results here, many open questions can be further explored,
such as searching for a general recipe for the optimal measurement
common to all parameters when the partial commutativity condition
is satisfied, saturating the QFIM asymptotically due to collective
measurements on a large number of identical states, etc. Our work
has potential applications in quantum sensing, quantum enhanced imaging,
in particular may shed light on investigating the attainability of
the Helstrom CR bound for an initial probe state undergoing noisy
dynamics and moment estimation in quantum imaging of finite number
of point sources.

\section{Acknowledgments}

We thank Marco Genoni, Haidong Yuan, Sisi Zhou and Liang Jiang for
useful discussions. This work was supported by U.S. Army Research
Office Grants No. W911NF-18-10178, No. W911NF-15-1-0496, No. W911NF-13-1-0402,
and by National Science Foundation Grants No. DMR-1809343, No. DMR-1506081.
Y. Z. is supported by U.S. Office of Naval Research.

\appendix

\section{\label{sec:Matrix-SLD}The matrix representation of the SLD}

We denote the orthonormal basis vectors of the support and the kernel
of $\rho_{\bm{\lambda}}$ as $\ket{\psi_{n\bm{\lambda}}}$ and $\ket{e_{n\bm{\lambda}}}$
respectively. Then Eq.~(\ref{eq:SLD}), the defining equation of
the SLD, in the basis formed by $\{\ket{\psi_{n\bm{\lambda}}},\,\ket{e_{n\bm{\lambda}}}\}$
becomes 
\begin{equation}
[L_{i\bm{\lambda}}]_{mn}\varrho_{n\bm{\lambda}}+[L_{i\bm{\lambda}}]_{mn}\varrho_{m\bm{\lambda}}=2[\partial_{i}\rho_{\bm{\lambda}}]_{mn},\label{eq:MatrixDefSLD}
\end{equation}
where $[L_{i\bm{\lambda}}]_{mn}\equiv\braket{m\big|L_{i\bm{\lambda}}\big|n}$,
$\ket{n}$ is the eigenvector of $\rho_{\bm{\lambda}}$ which could
be either $\ket{\psi_{n\bm{\lambda}}}$ or $\ket{e_{n\bm{\lambda}}}$,
\begin{align}
\partial_{i}\rho_{\bm{\lambda}} & =\sum_{n}\partial_{i}p_{n\bm{\lambda}}\ket{\psi_{n\bm{\lambda}}}\bra{\psi_{n\bm{\lambda}}}+\sum_{n}p_{n\bm{\lambda}}\ket{\partial_{i}\psi_{n\bm{\lambda}}}\bra{\psi_{n\bm{\lambda}}}\nonumber \\
 & +\sum_{n}p_{n\bm{\lambda}}\ket{\psi_{n\bm{\lambda}}}\bra{\partial_{i}\psi_{n\bm{\lambda}}}\label{eq:dirho}
\end{align}
and $\varrho_{n\bm{\lambda}}$ is the corresponding eigenvalue which
could be either the positive eigenvalue $p_{n\bm{\lambda}}$ or zero.
From Eq.~(\ref{eq:dirho}), we know that for $\ket{m}=\ket{e_{m\bm{\lambda}}}$
and $\ket{n}=\ket{e_{n\bm{\lambda}}}$, where $\varrho_{m}=\varrho_{n}=0$,
\begin{equation}
\braket{e_{m\bm{\lambda}}\big|\partial_{i}\rho_{\bm{\lambda}}\big|e_{n\bm{\lambda}}}=0,\,\forall m,\,n.
\end{equation}
Thus we can choose 
\begin{equation}
\braket{e_{m\bm{\lambda}}\big|L_{i\bm{\lambda}}\big|e_{n\bm{\lambda}}}=0.
\end{equation}
Therefore the following choice of the SLD
\begin{equation}
[L_{i\bm{\lambda}}]_{mn}=\begin{cases}
0 & \ket{m}=\ket{e_{m\bm{\lambda}}}\:\text{and}\:\ket{n}=\ket{e_{n\bm{\lambda}}}\\
\frac{2[\partial_{i}\rho_{\bm{\lambda}}]_{mn}}{\varrho_{m\bm{\lambda}}+\varrho_{n\bm{\lambda}}} & \text{else}
\end{cases}\label{eq:Limn}
\end{equation}
can satisfy its matrix definition Eq.~(\ref{eq:MatrixDefSLD}). Based
on Eqs.~(\ref{eq:dirho}, \ref{eq:Limn}), a matrix representation
of the SLD, 
\begin{align}
L_{i\bm{\lambda}} & =\sum_{n}\frac{\partial_{i}p_{n,\,\bm{\lambda}}}{p_{n,\,\bm{\lambda}}}\ket{\psi_{n\bm{\lambda}}}\bra{\psi_{n\bm{\lambda}}}\nonumber \\
 & +2\sum_{m,\,n}\frac{p_{m\bm{\lambda}}-p_{n\bm{\lambda}}}{p_{m\bm{\lambda}}+p_{n\bm{\lambda}}}\braket{\partial_{i}\psi_{m\bm{\lambda}}\big|\psi_{n\bm{\lambda}}}\ket{\psi_{m\bm{\lambda}}}\bra{\psi_{n\bm{\lambda}}}\nonumber \\
 & +\left[2\sum_{m,\,n}\braket{\partial_{i}\psi_{n\bm{\lambda}}\big|e_{m\bm{\lambda}}}\ket{\psi_{n\bm{\lambda}}}\bra{e_{m\bm{\lambda}}}+\text{c.c.}\right],\label{eq:Matrix-SLD}
\end{align}
can be found \citep{paris_quantum_2009}. With Eq.~(\ref{eq:Matrix-SLD}),
it is straightforward to calculate, for a state $\ket{\psi}=\ket{\psi^{0}}+\ket{\psi^{\perp}}$,
\begin{align}
L_{i\bm{\lambda}}\ket{\psi} & =L_{i\bm{\lambda}}^{\perp}\ket{\psi}+2\sum_{m,\,n}\braket{\partial_{i}\psi_{n\bm{\lambda}}\big|e_{m\bm{\lambda}}}\braket{e_{m\bm{\lambda}}\big|\psi^{0}}\ket{\psi_{n\bm{\lambda}}}\nonumber \\
 & +2\sum_{m,\,n}\braket{e_{m\bm{\lambda}}\big|\partial_{i}\psi_{n\bm{\lambda}}}\braket{\psi_{n,\,\bm{\lambda}}\big|\psi^{\perp}}\ket{e_{m,\,\bm{\lambda}}},\label{eq:LiPsi}
\end{align}
where 
\begin{align}
L_{i\bm{\lambda}}^{\perp} & \equiv\sum_{n}\frac{\partial_{i}p_{n\bm{\lambda}}}{p_{n\bm{\lambda}}}\ket{\psi_{n\bm{\lambda}}}\bra{\psi_{n\bm{\lambda}}}\nonumber \\
 & +2\sum_{m,n}\frac{p_{m\bm{\lambda}}-p_{n\bm{\lambda}}}{p_{m\bm{\lambda}}+p_{n\bm{\lambda}}}\braket{\partial_{i}\psi_{m\bm{\lambda}}\big|\psi_{n\bm{\lambda}}}\ket{\psi_{m\bm{\lambda}}}\bra{\psi_{n\bm{\lambda}}},\label{eq:Liperp}
\end{align}
is the projection of the SLD on the subspace $\text{supp}(\rho_{\bm{\lambda}})$.
Upon noting the following identities
\begin{equation}
\braket{\partial_{i}\psi_{n\bm{\lambda}}\big|e_{m\bm{\lambda}}}=\braket{\partial_{i}^{0}\psi_{n\bm{\lambda}}\big|e_{m\bm{\lambda}}},
\end{equation}
\begin{align}
\sum_{m}\braket{\partial_{i}^{0}\psi_{n\bm{\lambda}}\big|e_{m\bm{\lambda}}}\braket{e_{m\bm{\lambda}}\big|\psi^{0}}\nonumber \\
=\braket{\partial_{i}^{0}\psi_{n\bm{\lambda}}\big|\psi^{0}}=\braket{\partial_{i}^{0}\psi_{n\bm{\lambda}}\big|\psi},
\end{align}
we obtain
\begin{equation}
\braket{\psi_{n\bm{\lambda}}\big|L_{i\bm{\lambda}}\big|\psi}=\braket{\psi_{n\bm{\lambda}}\big|L_{i\bm{\lambda}}^{\perp}\big|\psi}+2\braket{\partial_{i}^{0}\psi_{n\bm{\lambda}}\big|\psi}.\label{eq:LiPsiPerp}
\end{equation}

\section{Details on the example of estimating the separations of two incoherent
optical point sources}

\subsection{Properties of $\ket{\psi_{1,\,2\bm{s}}}$\label{subsec:psi12Properties}}

We mention in the main text that $\theta_{\bm{s}}$ is chosen such
that 
\begin{equation}
\Delta_{\bm{s}}\equiv e^{2\text{i}\theta_{\bm{s}}}\int d\bm{r}\Phi_{\bm{s}}^{2}(\bm{r})\label{eq:Delta-def}
\end{equation}
 is real. Defining 
\begin{align}
v_{\bm{s}} & \equiv\text{Re}[\int d\bm{r}\Phi_{\bm{s}}^{2}(\bm{r})]\nonumber \\
 & =\mathcal{A}^{2}\int d\bm{r}\text{circ}(r/a)\cos(2k\bm{s}_{\perp}\cdot\bm{r})\cos(ks_{3}r^{2}),\label{eq:vs}
\end{align}
\begin{align}
w_{\bm{s}} & \equiv\text{Im}[\int d\bm{r}\Phi_{\bm{s}}^{2}(\bm{r})]\nonumber \\
 & =-\mathcal{A}^{2}\int d\bm{r}\text{circ}(r/a)\cos(2k\bm{s}_{\perp}\cdot\bm{r})\sin(ks_{3}r^{2}),\label{eq:ws}
\end{align}
we can express $\theta_{\bm{s}}$ and $\Delta_{\bm{s}}$ as 
\begin{equation}
\tan2\theta_{\bm{s}}=-\frac{w_{\bm{s}}}{v_{\bm{s}}},\label{eq:tan2theta}
\end{equation}
\begin{equation}
\Delta_{\bm{s}}=\sqrt{v_{\bm{s}}^{2}+w_{\bm{s}}^{2}}.\label{eq:delta}
\end{equation}
As is clear from Eqs.~(\ref{eq:vs}, \ref{eq:ws}), $v_{\bm{s}}$
is even in $\bm{s}$ while $w_{\bm{s}}$ is odd in $\bm{s}$. Thus
according to Eq.~(\ref{eq:tan2theta}), we know $\theta_{\bm{s}}$
is odd in $\bm{s}$, i.e., 
\begin{equation}
\theta_{\bm{s}}=-\theta_{-\bm{s}}.\label{eq:theta-odd}
\end{equation}
With these observations, the one photon state defined in the main
text can be diagonalized by the following state:
\begin{eqnarray}
\psi_{1\bm{s}}(\bm{r}) & = & \frac{1}{\sqrt{2(1+\Delta_{\bm{s}})}}[\Psi_{+\bm{s}}(\bm{r})+\Psi_{-\bm{s}}(\bm{r})]\nonumber \\
 &  & =\frac{\tilde{\psi}_{1\bm{s}}(\bm{r})}{\sqrt{4p_{1\bm{s}}}},\label{eq:psi1r}\\
\psi_{2\bm{s}}(\bm{r}) & = & \frac{-\text{i}}{\sqrt{2(1-\Delta_{\bm{s}})}}[\Psi_{+\bm{s}}(\bm{r})-\Psi_{-\bm{s}}(\bm{r})]\nonumber \\
 &  & =\frac{\tilde{\psi}_{2\bm{s}}(\bm{r})}{\sqrt{4p_{2\bm{s}}}}\label{eq:psi2r}
\end{eqnarray}
where 
\begin{equation}
\Psi_{\pm\bm{s}}(\bm{r})=e^{\pm i\theta_{\bm{s}}}\Phi_{\pm\bm{s}}(\bm{r}),\label{eq:PsiPM-Phis}
\end{equation}
\begin{equation}
\Phi_{\bm{s}}(\bm{r})=\mathcal{A}\text{circ}(r/a)\exp[ik(\bm{s}_{\perp}\cdot\bm{r}-s_{3}r^{2}/2)].\label{eq:Phis}
\end{equation}
We can write the explicit forms of $\psi_{\pm\bm{s}}(\bm{r})\equiv\braket{\bm{r}\big|\psi_{\pm\bm{s}}}$
as
\begin{align}
\psi_{1\bm{s}}(\bm{r}) & =\frac{\tilde{\psi}_{1\bm{s}}(\bm{r})}{\sqrt{4p_{1\bm{s}}}}=\frac{2\mathcal{A}\text{circ}(r/a)}{\sqrt{4p_{1\bm{s}}}},\nonumber \\
 & \times\cos(\theta_{\bm{s}}+k\bm{s}_{\perp}\cdot\bm{r}-ks_{3}r^{2}/2)\label{eq:psiP-expression}
\end{align}
\begin{align}
\psi_{2\bm{s}}(\bm{r}) & =\frac{\tilde{\psi}_{2\bm{s}}(\bm{r})}{\sqrt{4p_{2\bm{s}}}}=\frac{2\mathcal{A}\text{circ}(r/a)}{\sqrt{4p_{2\bm{s}}}}.\nonumber \\
 & \times\sin(\theta_{\bm{s}}+k\bm{s}_{\perp}\cdot\bm{r}-ks_{3}r^{2}/2)\label{eq:psiM-expression}
\end{align}
Eqs. (\ref{eq:PsiPM-Phis}, \ref{eq:Phis}) immediately tell us 
\begin{align}
\braket{\partial_{i}\Psi_{+\bm{s}}\big|\Psi_{+\bm{s}}} & =-\braket{\partial_{i}\Psi_{-\bm{s}}\big|\Psi_{-\bm{s}}}\nonumber \\
 & =-i\partial_{i}\theta_{\bm{s}}+i\delta_{i3}ka^{2}/4,\label{eq:diPsiPP}
\end{align}
\begin{align}
\braket{\partial_{i}\Psi_{-\bm{s}}\big|\Psi_{+\bm{s}}} & =\braket{\partial_{i}\Psi_{+\bm{s}}\big|\Psi_{-\bm{s}}}\nonumber \\
 & =-\partial_{i}\Delta_{\bm{s}}/2,\label{eq:diPsiMP}
\end{align}
where $\delta_{i3}$ is the Kronecker delta.

From Eqs.~(\ref{eq:psi1r}, \ref{eq:psi2r}) we know that $\psi_{1,\,2\bm{s}}(\bm{r})$
are real. Therefore we conclude $\braket{\partial_{i}\psi_{1\bm{s}}\big|\psi_{1\bm{s}}}$
and $\braket{\partial_{i}\psi_{2\bm{s}}\big|\psi_{2\bm{s}}}$ must
be real. On other hand, they must be purely imaginary due to the fact
that $\braket{\partial_{i}\psi_{n\bm{s}}\big|\psi_{n\bm{s}}}+\braket{\psi_{n\bm{s}}\big|\partial_{i}\psi_{n\bm{s}}}=0$
for $n=1,\,2$. So we end up with
\begin{equation}
\braket{\partial_{i}\psi_{1\bm{s}}\big|\psi_{1\bm{s}}}=\braket{\partial_{i}\psi_{2\bm{s}}\big|\psi_{2\bm{s}}}=0.\label{eq:dipsiPpsiP}
\end{equation}
Furthermore $\braket{\partial_{i}\psi_{1\bm{s}}\big|\psi_{2\bm{s}}}$
is also real since, upon application of Eqs.~(\ref{eq:diPsiPP},
\ref{eq:diPsiMP}),
\begin{align}
\braket{\partial_{i}\psi_{1\bm{s}}\big|\psi_{2\bm{s}}} & =\frac{-\text{i}}{2\sqrt{1-\Delta_{\bm{s}}^{2}}}(\bra{\partial_{i}\Psi_{+\bm{s}}}+\bra{\partial_{i}\Psi_{-\bm{s}}})\nonumber \\
 & \times(\ket{\Psi_{+\bm{s}}}-\ket{\Psi_{-\bm{s}}})\nonumber \\
 & =\frac{-\text{i}}{2\sqrt{1-\Delta_{\bm{s}}^{2}}}(\braket{\partial_{i}\Psi_{+\bm{s}}\big|\Psi_{+\bm{s}}}-\braket{\partial_{i}\Psi_{-\bm{s}}\big|\Psi_{-\bm{s}}}\nonumber \\
 & +\cancel{\braket{\partial_{i}\Psi_{-\bm{s}}\big|\Psi_{+\bm{s}}}-\braket{\partial_{i}\Psi_{+\bm{s}}\big|\Psi_{-\bm{s}}}})\nonumber \\
 & =\frac{-\text{i}\braket{\partial_{i}\Psi_{+\bm{s}}\big|\Psi_{+\bm{s}}}}{\sqrt{1-\Delta_{\bm{s}}^{2}}}\nonumber \\
 & =\frac{-\partial_{i}\theta_{\bm{s}}+\delta_{i3}ka^{2}/4}{\sqrt{1-\Delta_{\bm{s}}^{2}}}.\label{eq:diPsi1Psi2}
\end{align}
The fact that $\partial_{i}\braket{\psi_{2\bm{s}}\big|\psi_{1\bm{s}}}=0$
gives $\braket{\partial_{i}\psi_{2\bm{s}}\big|\psi_{1\bm{s}}}=-\braket{\psi_{2\bm{s}}\big|\partial_{i}\psi_{1\bm{s}}}$.
On the other hand Eq. (\ref{eq:diPsi1Psi2}) tells us $\braket{\psi_{2\bm{s}}\big|\partial_{i}\psi_{1\bm{s}}}=\braket{\partial_{i}\psi_{1\bm{s}}\big|\psi_{2\bm{s}}}$.
Thus we know
\begin{equation}
\braket{\partial_{i}\psi_{2\bm{s}}\big|\psi_{1\bm{s}}}=-\braket{\partial_{i}\psi_{1\bm{s}}\big|\psi_{2\bm{s}}}.\label{eq:diPsi2Psi1}
\end{equation}

\subsection{\label{subsec:s0}The case $\bm{s}=0$}

We first apply Theorem 3 in the main text to obtain the following
Lemmas, which will be useful subsequently.

\begin{lem}For the mixed state $\rho_{\bm{s}}$, the matrix bound
of the CFIM due to a \textit{regular} projector $\Pi_{k}=\sum_{\alpha}\ket{\pi_{k\alpha}}\bra{\pi_{k\alpha}}$
is saturated if and only if 

\begin{gather}
\frac{\partial_{i}p_{1\bm{s}}}{p_{1\bm{s}}}\braket{\psi_{1\bm{s}}\big|\pi_{k\alpha}}-4p_{2\bm{s}}\braket{\partial_{_{i}}\psi_{1\bm{s}}\big|\psi_{2\bm{s}}}\braket{\psi_{2\bm{s}}\big|\pi_{k\alpha}}\nonumber \\
+2\braket{\partial_{i}\psi_{1\bm{s}}\big|\pi_{k\alpha}}=\xi_{i}^{k}\braket{\psi_{1\bm{s}}\big|\pi_{k\alpha}},\label{eq:Lemm1-cond1}
\end{gather}
\begin{gather}
\frac{\partial_{_{i}}p_{2\bm{s}}}{p_{2\bm{s}}}\braket{\psi_{2\bm{s}}\big|\pi_{k\alpha}}-4p_{1\bm{s}}\braket{\partial_{_{i}}\psi_{2\bm{s}}\big|\psi_{1\bm{s}}}\braket{\psi_{1\bm{s}}\big|\pi_{k\alpha}}\nonumber \\
+2\braket{\partial_{i}\psi_{2\bm{s}}\big|\pi_{k\alpha}}=\xi_{i}^{k}\braket{\psi_{2\bm{s}}\big|\pi_{k\alpha}},\label{eq:Lemma1-cond2}
\end{gather}
holds $\forall i,\,\alpha$, where $p_{1,\,2\bm{s}}=(1\pm\Delta_{\bm{s}})/2$
and $\xi_{i}^{k}$ is real and independent of $n$ and $\alpha$.
\end{lem}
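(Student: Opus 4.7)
The plan is to prove this Lemma as a direct specialization of Theorem \ref{thm:reg-explicit} to the rank-two density operator $\rho_{\bm{s}} = p_{1\bm{s}}\ket{\psi_{1\bm{s}}}\bra{\psi_{1\bm{s}}} + p_{2\bm{s}}\ket{\psi_{2\bm{s}}}\bra{\psi_{2\bm{s}}}$ that was diagonalized in Sec.~\ref{sec:superresolution} and in Appendix \ref{subsec:psi12Properties}. Since Theorem \ref{thm:reg-explicit} already gives the saturation condition in the form
\[
\braket{\psi_{n\bm{s}}|L_{i\bm{s}}^{\perp}|\pi_{k\alpha}}+2\braket{\partial_{i}^{0}\psi_{n\bm{s}}|\pi_{k\alpha}}=\xi_{i}^{k}\braket{\psi_{n\bm{s}}|\pi_{k\alpha}},
\]
all that remains is to unpack the two terms on the left-hand side explicitly for $n=1,2$ using what we already know about $\ket{\psi_{1,2\bm{s}}}$.

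The first step will be to specialize the matrix representation of $L_{i\bm{s}}^{\perp}$ in Eq.~(\ref{eq:Liperp}) to the two-dimensional support. Only the indices $m,n\in\{1,2\}$ contribute. The key simplification is Eq.~(\ref{eq:dipsiPpsiP}), which tells us $\braket{\partial_i\psi_{1\bm{s}}|\psi_{1\bm{s}}} = \braket{\partial_i\psi_{2\bm{s}}|\psi_{2\bm{s}}} = 0$, so the $m=n$ off-diagonal contributions vanish and only the $(m,n)=(1,2)$ and $(2,1)$ pieces survive. Furthermore, because $p_{1\bm{s}}+p_{2\bm{s}}=1$, the denominator $p_{m\bm{s}}+p_{n\bm{s}}$ collapses to unity, which will be what produces the simple numerical coefficients in Eqs.~(\ref{eq:Lemm1-cond1})--(\ref{eq:Lemma1-cond2}).

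Next I would evaluate $\braket{\partial_{i}^{0}\psi_{n\bm{s}}|\pi_{k\alpha}}$ by writing $\ket{\partial_i\psi_{n\bm{s}}} = \ket{\partial_i^0\psi_{n\bm{s}}}+\ket{\partial_i^\perp\psi_{n\bm{s}}}$ with the explicit formula $\ket{\partial_i^\perp\psi_{n\bm{s}}} = \sum_m \ket{\psi_{m\bm{s}}}\braket{\psi_{m\bm{s}}|\partial_i\psi_{n\bm{s}}}$ from the definition in Sec.~\ref{sec:Preliminary}. Again the diagonal ($m=n$) term vanishes by Eq.~(\ref{eq:dipsiPpsiP}), leaving only the cross term with $\braket{\psi_{\bar n\bm{s}}|\partial_i\psi_{n\bm{s}}}$, where $\bar n$ is the other index. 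Assembling the two contributions for $n=1$ gives a coefficient $2(p_{1\bm{s}}-p_{2\bm{s}})-2$ in front of the $\braket{\partial_i\psi_{1\bm{s}}|\psi_{2\bm{s}}}\braket{\psi_{2\bm{s}}|\pi_{k\alpha}}$ term, which simplifies (using $p_{1\bm{s}}+p_{2\bm{s}}=1$) to exactly $-4p_{2\bm{s}}$, reproducing Eq.~(\ref{eq:Lemm1-cond1}). The $n=2$ case is completely symmetric and yields Eq.~(\ref{eq:Lemma1-cond2}). The reality of $\xi_i^k$ and its independence of $n,\alpha$ follow directly from the corresponding clause in Theorem \ref{thm:reg-explicit}.

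Since Theorem \ref{thm:reg-explicit} is already established and the diagonalization of $\rho_{\bm{s}}$ is done in Appendix \ref{subsec:psi12Properties}, there is no real obstacle beyond careful algebraic bookkeeping. The main point to watch is the consistent use of the identity $p_{1\bm{s}}+p_{2\bm{s}}=1$ to convert the generic $(p_m-p_n)/(p_m+p_n)$ weights into the explicit $-4p_{\bar n\bm{s}}$ coefficient, and the cancellation of the diagonal projections $\braket{\partial_i\psi_{n\bm{s}}|\psi_{n\bm{s}}}$ via Eq.~(\ref{eq:dipsiPpsiP}). Both directions of the equivalence follow from the ``if and only if'' in Theorem \ref{thm:reg-explicit}, so no separate converse argument is needed.
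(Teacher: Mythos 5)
Your proposal is correct and follows essentially the same route as the paper: specialize Theorem \ref{thm:reg-explicit} to the rank-two state, expand $L_{i\bm{s}}^{\perp}$ from Eq.~(\ref{eq:Liperp}) and $\ket{\partial_i^0\psi_{n\bm{s}}}$ on the two-dimensional support, and use Eq.~(\ref{eq:dipsiPpsiP}) together with $p_{1\bm{s}}+p_{2\bm{s}}=1$ to collapse the coefficients to $-4p_{2\bm{s}}$ and $-4p_{1\bm{s}}$. The only cosmetic difference is that the paper writes the intermediate expressions in terms of $\Delta_{\bm{s}}$ before converting back to $p_{1,2\bm{s}}$, whereas you work with the eigenvalues directly.
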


\begin{proof}According to Eqs. (\ref{eq:Liperp}, \ref{eq:dipsiPpsiP}),
we find 
\begin{align}
L_{i\bm{s}}^{\perp} & =\frac{\partial_{i}\Delta_{\bm{s}}}{1+\Delta{}_{\bm{s}}}\ket{\psi_{1\bm{s}}}\bra{\psi_{1\bm{s}}}-\frac{\partial_{_{i}}\Delta_{\bm{s}}}{1-\Delta_{\bm{s}}}\ket{\psi_{2\bm{s}}}\bra{\psi_{2\bm{s}}}\nonumber \\
 & +2\Delta_{\bm{s}}\braket{\partial_{_{i}}\psi_{1\bm{s}}\big|\psi_{2\bm{s}}}\ket{\psi_{1\bm{s}}}\bra{\psi_{2\bm{s}}}\nonumber \\
 & -2\Delta_{\bm{s}}\braket{\partial_{_{i}}\psi_{2\bm{s}}\big|\psi_{1\bm{s}}}\ket{\psi_{2\bm{s}}}\bra{\psi_{1\bm{s}}}.
\end{align}
Thus \begin{widetext}
\begin{gather}
L_{i\bm{s}}^{\perp}\ket{\pi_{k\alpha}}=\ket{\psi_{1\bm{s}}}\left[\frac{\partial_{i}\Delta{}_{\bm{s}}}{1+\Delta{}_{\bm{s}}}\braket{\psi_{1\bm{s}}\big|\pi_{k\alpha}}+2\Delta_{\bm{s}}\braket{\partial_{_{i}}\psi_{1\bm{s}}\big|\psi_{2\bm{s}}}\braket{\psi_{2\bm{s}}\big|\pi_{k\alpha}}\right]\nonumber \\
+\ket{\psi_{2\bm{s}}}\left[-\frac{\partial_{_{i}}\Delta{}_{\bm{s}}}{1-\Delta_{\bm{s}}}\braket{\psi_{2\bm{s}}\big|\pi_{k\alpha}}-2\Delta{}_{\bm{s}}\braket{\partial_{_{i}}\psi_{2\bm{s}}\big|\psi_{1\bm{s}}}\braket{\psi_{1\bm{s}}\big|\pi_{k\alpha}}\right],
\end{gather}
\end{widetext}which can be rewritten as upon noting Eq. (\ref{eq:dipsiPpsiP}),
\begin{gather}
L_{i\bm{s}}^{\perp}\ket{\pi_{k\alpha}}=\ket{\psi_{1\bm{s}}}\left[\frac{\partial_{i}\Delta_{\bm{s}}}{1+\Delta{}_{\bm{s}}}\braket{\psi_{1\bm{s}}\big|\pi_{k\alpha}}+2\Delta{}_{\bm{s}}\braket{\partial_{_{i}}^{\perp}\psi_{1\bm{s}}\big|\pi_{k\alpha}}\right]\nonumber \\
+\ket{\psi_{2\bm{s}}}\left[-\frac{\partial_{_{i}}\Delta{}_{\bm{s}}}{1-\Delta{}_{\bm{s}}}\braket{\psi_{2\bm{s}}\big|\pi_{k\alpha}}-2\Delta{}_{\bm{s}}\braket{\partial_{_{i}}^{\perp}\psi_{2\bm{s}}\big|\pi_{k\alpha}}\right].
\end{gather}
In order to saturate the Helstrom CR bound, according to Theorem \ref{thm:reg-explicit}
in the main text, every regular projector $\Pi_{k}=\sum_{\alpha}\ket{\pi_{k\alpha}}\bra{\pi_{k\alpha}}$
must satisfy
\begin{gather}
\frac{\partial_{i}\Delta{}_{\bm{s}}}{1+\Delta{}_{\bm{s}}}\braket{\psi_{1\bm{s}}\big|\pi_{k\alpha}}+2\Delta{}_{\bm{s}}\braket{\partial_{_{i}}^{\perp}\psi_{1\bm{s}}\big|\pi_{k\alpha}}\nonumber \\
+2\braket{\partial_{i}^{0}\psi_{1\bm{s}}\big|\pi_{k\alpha}}=\xi_{i}^{k}\braket{\psi_{1\bm{s}}\big|\pi_{k\alpha}},
\end{gather}
\begin{gather}
-\frac{\partial_{_{i}}\Delta{}_{\bm{s}}}{1-\Delta{}_{\bm{s}}}\braket{\psi_{2\bm{s}}\big|\pi_{k\alpha}}-2\Delta{}_{\bm{s}}\braket{\partial_{_{i}}^{\perp}\psi_{2\bm{s}}\big|\pi_{k\alpha}}\nonumber \\
+2\braket{\partial_{i}^{0}\psi_{2\bm{s}}\big|\pi_{k\alpha}}=\xi_{i}^{k}\braket{\psi_{2\bm{s}}\big|\pi_{k\alpha}},
\end{gather}
where $\xi_{i}^{k}$ is real. With the facts that $\braket{\partial_{_{i}}^{\perp}\psi_{1,\,2\bm{s}}\big|\pi_{k\alpha}}+\braket{\partial_{i}^{0}\psi_{1,\,2\bm{s}}\big|\pi_{k\alpha}}=\braket{\partial_{i}\psi_{1,\,2\bm{s}}\big|\pi_{k\alpha}}$
and $p_{1,\,2\bm{s}}=(1\pm\Delta{}_{\bm{s}})/2$ , one can easily
concludes the proof. \end{proof}

\subsubsection{Proof of Corollary 1 in the main text}

\begin{proof}We find at $\bm{s}=0$ 
\begin{equation}
\Delta_{\bm{s}}=1
\end{equation}
and $p_{1\bm{s}}\big|_{\bm{s}=0}=1$ and $p_{2\bm{s}}\big|_{\bm{s}=0}=0$.
Therefore $p_{1\bm{s}}$ and $p_{2\bm{s}}$ attain their local maximum
and minimum respectively at $\bm{s}=0$, i.e., $\partial_{i}p_{1\bm{s}}=\partial_{i}p_{2\bm{s}}=0$,
which indicates 
\begin{equation}
\partial_{i}\Delta_{\bm{s}}=0,\,i=1,\,2,\,3.
\end{equation}
We recognize that $\bm{s}=0$ is the critical point of the change
of the rank of $\rho_{\bm{s}}$. In this case, both the normalized
vector $\ket{\psi_{2\bm{s}}}$ and the first term on the left hand
side of Eq.~(\ref{eq:Lemma1-cond2}) is not well-defined. Therefore
Eqs.~(\ref{eq:Lemm1-cond1}, \ref{eq:Lemma1-cond2}) should be understood
in the sense of the limit $\bm{s}\to0$. It can be also shown that
for $i=1,\,2,\,3$,
\begin{equation}
\partial_{i}\tilde{\psi}_{1\bm{s}}(\bm{r})\big|_{\bm{s}=0}=\partial_{i}\psi_{1\bm{s}}(\bm{r})\big|_{\bm{s}=0}=0,\label{eq:diPsi-r-sZero}
\end{equation}
\begin{equation}
\ket{\tilde{\psi}_{2\bm{s}}}\big|_{\bm{s}=0}=0
\end{equation}
With these facts, the second term of the left hand side of Eq. (\ref{eq:Lemm1-cond1})
reads
\begin{gather}
4p_{2\bm{s}}\braket{\partial_{_{i}}\psi_{1\bm{s}}\big|\psi_{2\bm{s}}}\braket{\psi_{2\bm{s}}\big|\pi_{k\alpha}}\nonumber \\
=\braket{\partial_{_{i}}\psi_{1\bm{s}}\big|\tilde{\psi}_{2\bm{s}}}\braket{\tilde{\psi}_{2\bm{s}}\big|\pi_{k\alpha}}=0,
\end{gather}
 which immediately tells us that for a regular projector $\Pi_{k}$
that saturates its matrix bound $\xi_{i}^{k}=0\:\forall i=1,\,2\,\,3.$
In order to saturate the matrix bound, it remains to show Eq. (\ref{eq:Lemma1-cond2})
is consistent with the result $\xi_{i}^{k}=0$. It is readily checked
that 
\begin{equation}
\braket{\partial_{i}\psi_{2\bm{s}}\big|\pi_{k\alpha}}=\frac{\braket{\partial_{i}\tilde{\psi}_{2\bm{s}}\big|\pi_{k\alpha}}}{\sqrt{4p_{2\bm{s}}}}-\frac{\partial_{i}p_{2\bm{s}}}{2p_{2\bm{s}}}\frac{\braket{\tilde{\psi}_{2\bm{s}}\big|\pi_{k\alpha}}}{\sqrt{4p_{2\bm{s}}}}
\end{equation}
\begin{gather}
\lim_{\bm{s}\to0}p_{1\bm{s}}\braket{\partial_{_{i}}\psi_{2\bm{s}}\big|\psi_{1\bm{s}}}\braket{\psi_{1\bm{s}}\big|\pi_{k\alpha}}\nonumber \\
=-\lim_{\bm{s}\to0}\frac{\braket{\tilde{\psi}_{2\bm{s}}\big|\partial_{_{i}}\psi_{1\bm{s}}}}{\sqrt{4p_{2\bm{s}}}}\braket{\psi_{1\bm{s}}\big|\pi_{k\alpha}}
\end{gather}
the left hand and right hand sides of Eq. (\ref{eq:Lemma1-cond2})
can be written as 
\begin{equation}
\text{LHS}=\lim_{\bm{s}\to0}\frac{2\braket{\tilde{\psi}_{2\bm{s}}\big|\partial_{_{i}}\psi_{1\bm{s}}}}{\sqrt{p_{2\bm{s}}}}\braket{\psi_{1\bm{s}}\big|\pi_{k\alpha}}+\frac{\braket{\partial_{i}\tilde{\psi}_{2\bm{s}}\big|\pi_{k\alpha}}}{\sqrt{p_{2\bm{s}}}}\label{eq:LHS}
\end{equation}
\begin{equation}
\text{RHS}=\xi_{i}^{k}\lim_{\bm{s}\to0}\braket{\psi_{2\bm{s}}\big|\pi_{k\alpha}}=\xi_{i}^{k}\lim_{\bm{s}\to0}\frac{\braket{\tilde{\psi}_{2\bm{s}}\big|\pi_{k\alpha}}}{\sqrt{4p_{2\bm{s}}}}\label{eq:RHS}
\end{equation}
Upon eliminating the factor $1/\sqrt{p_{2\bm{s}}}$ in both equations
and noting that $\lim_{\bm{s}\to0}\braket{\tilde{\psi}_{2\bm{s}}\big|\partial_{_{i}}\psi_{1\bm{s}}}=\lim_{\bm{s}\to0}\braket{\tilde{\psi}_{2\bm{s}}\big|\pi_{k\alpha}}=0$,
we find that $\text{LHS}=\text{RHS}$ is the consistent with the result
$\xi_{i}^{k}=0$ if and only if 
\begin{equation}
\lim_{\bm{s}\to0}\braket{\partial_{i}\tilde{\psi}_{2\bm{s}}\big|\pi_{k\alpha}}=\braket{\partial_{i}\tilde{\psi}_{2\bm{s}}\big|\pi_{k\alpha}}\big|_{\bm{s}=0}=0,\,\forall i,\,\alpha.\label{eq:diPsi2tildPikalpha-reg}
\end{equation}
It is straightforward to calculate for $i=1,\,2,\,3$, we have

\begin{align}
\partial_{i}\tilde{\psi}_{2\bm{s}}(\bm{r})\big|_{\bm{s}=0} & =-\text{i}[\partial_{i}\Psi_{+\bm{s}}(\bm{r})-\partial_{i}\Psi_{-\bm{s}}(\bm{r})]\big|_{\bm{s}=0}\nonumber \\
 & =[2\partial_{i}\theta_{\bm{s}}\Phi_{\bm{s}}(\bm{r})-2\text{i}\partial_{i}\Phi_{\bm{s}}(\bm{r})]\big|_{\bm{s}=0}\label{eq:diPsi2tild-r}
\end{align}
On the other hand, at $\bm{s}=0$, according to Eqs.~(\ref{eq:vs},
\ref{eq:ws}), the explicit forms of $v_{\bm{s}}$ and $w_{\bm{s}}$
can be expressed as
\begin{equation}
v_{\bm{s}}\big|_{\bm{s}=0}=\pi\mathcal{A}^{2}a^{2},
\end{equation}
\begin{equation}
w_{\bm{s}}\big|_{\bm{s}=0}=0,
\end{equation}
According to Eq.~(\ref{eq:tan2theta}), we obtain
\begin{equation}
\theta_{\bm{s}}\big|_{\bm{s}=0}=0.
\end{equation}
Differentiating both sides of Eq. (\ref{eq:Delta-def}), we obtain
\begin{equation}
\partial_{i}\Delta_{\bm{s}}=2\text{i}\partial_{i}\theta_{\bm{s}}\int d\bm{r}\Phi_{\bm{s}}^{2}(\bm{r})+2e^{2\text{i}\theta_{\bm{s}}}\int d\bm{r}\Phi_{\bm{s}}(\bm{r})\partial_{i}\Phi_{\bm{s}}(\bm{r})
\end{equation}
So at $\bm{s}=0$, $\int d\bm{r}\Phi_{\bm{s}}^{2}(\bm{r})\big|_{\bm{s}=0}=1$
and $\Phi_{\bm{s}}(\bm{r})\big|_{\bm{s}=0}=$ is real and therefor
\begin{align}
\partial_{i}\theta_{\bm{s}}\big|_{\bm{s}=0} & =\text{i}\int d\bm{r}\Phi_{\bm{s}}(\bm{r})\partial_{i}\Phi_{\bm{s}}(\bm{r})\big|_{\bm{s}=0}\nonumber \\
 & =\text{i}\braket{\Phi_{\bm{s}}\big|\partial_{i}\Phi_{\bm{s}}}\big|_{\bm{s}=0}\label{eq:ditheta}
\end{align}
According to Eqs.~(\ref{eq:diPsi2tild-r}, \ref{eq:ditheta}), we
know 
\begin{align}
\ket{\partial_{i}\tilde{\psi}_{2\bm{s}}}\big|_{\bm{s}=0} & =2\text{i}(\ket{\Phi_{\bm{s}}}\braket{\Phi_{\bm{s}}\big|\partial_{i}\Phi_{\bm{s}}}-\ket{\partial_{i}\Phi_{\bm{s}}})\big|_{\bm{s}=0}\nonumber \\
 & =-2\text{i}\ket{\partial_{i}^{0}\Phi_{\bm{s}}}\label{eq:diPsi2tild-sZero}
\end{align}
where $\ket{\partial_{i}^{0}\Phi_{\bm{s}}}$ is the projection of
$\ket{\partial_{i}\Phi_{\bm{s}}}$ onto the kernel of $\ket{\Phi_{\bm{s}}}\bra{\Phi_{\bm{s}}}$.
Therefore the satisfaction of Eq.~(\ref{eq:diPsi2tildPikalpha-reg})
is equivalent as 
\begin{equation}
\braket{\partial_{i}^{0}\Phi_{\bm{s}}\big|\pi_{k\alpha}}\big|_{\bm{s}=0}=0,\,\forall i,\,\alpha
\end{equation}

The saturation of the matrix bound associated with a null projector
requires that
\begin{equation}
\braket{\partial_{i}\tilde{\psi}_{1\bm{s}}\big|\pi_{k\alpha}}\big|_{\bm{s}=0}=\eta_{ij}^{k}\braket{\partial_{j}\tilde{\psi}_{1\bm{s}}\big|\pi_{k\alpha}}\big|_{\bm{s}=0},\,\forall i,\,j,\,\alpha,\label{eq:diPsi1tildPikalpha-null}
\end{equation}
\begin{equation}
\braket{\partial_{i}\tilde{\psi}_{2\bm{s}}\big|\pi_{k\alpha}}\big|_{\bm{s}=0}=\eta_{ij}^{k}\braket{\partial_{j}\tilde{\psi}_{2\bm{s}}\big|\pi_{k\alpha}}\big|_{\bm{s}=0},\,\forall i,\,j,\,\alpha.\label{eq:dipsi2tildePikalpha-null}
\end{equation}
Due to Eq.~(\ref{eq:diPsi-r-sZero}), Eq.~(\ref{eq:diPsi1tildPikalpha-null})
is trivially satisfied. Note that for null projectors $\braket{\Phi_{\bm{s}}\big|\pi_{k\alpha}}\big|_{\bm{s}=0}=0$
and therefore according to Eq.~(\ref{eq:diPsi2tild-sZero}), we find
\begin{equation}
\braket{\partial_{i}\tilde{\psi}_{2\bm{s}}\big|\pi_{k\alpha}}\big|_{\bm{s}=0}=2\text{i}\braket{\partial_{i}\Phi_{\bm{s}}\big|\pi_{k\alpha}}\big|_{\bm{s}=0}.
\end{equation}
Now the satisfaction of Eq.~(\ref{eq:diPsi2tildPikalpha-reg}) is
equivalent as is 
\begin{equation}
\braket{\partial_{i}\Phi_{\bm{s}}\big|\pi_{k\alpha}}\big|_{\bm{s}=0}=\eta_{ij}^{k}\braket{\partial_{j}\Phi_{\bm{s}}\big|\pi_{k\alpha}}\big|_{\bm{s}=0},\,\forall i,\,j,\,\alpha.
\end{equation}
\end{proof}

\subsubsection{Details of constructing the optimal measurement in the main text}

It is easily calculated that

\begin{equation}
\ket{\Phi_{\bm{s}}}\big|_{\bm{s}=0}=\ket{Z_{0}^{0}},\label{eq:Phizero}
\end{equation}

\begin{equation}
\ket{\partial_{1}\Phi_{\bm{s}}}\big|_{\bm{s}=0}=ik\ket{Z_{1}^{1}}/2,\label{eq:d1Phizero}
\end{equation}
\begin{eqnarray}
\ket{\partial_{2}\Phi_{\bm{s}}}\big|_{\bm{s}=0} & = & ik\ket{Z_{1}^{-1}}/2,\label{eq:d2Phi0}
\end{eqnarray}
\begin{equation}
\ket{\partial_{3}\Phi_{\bm{s}}}\big|_{\bm{s}=0}=-ik(\ket{Z_{2}^{0}}/3+\ket{Z_{0}^{0}})/2.\label{eq:d3Phizero}
\end{equation}
With Eqs. (\ref{eq:Phizero}-\ref{eq:d3Phizero}), one can easily
understand the details in the construction recipes in the main text.
For example, the following facts can be obtained:
\begin{equation}
\braket{\partial_{i}^{0}\Phi_{\bm{s}}\big|Z_{0}^{0}}\big|_{\bm{s}=0}=\braket{\partial_{i}\Phi_{\bm{s}}\big|Z_{0}^{0}}\big|_{\bm{s}=0}=0,\,i=1,\,2,
\end{equation}
\begin{align}
\braket{\partial_{3}^{0}\Phi_{\bm{s}}\big|Z_{0}^{0}}\big|_{\bm{s}=0} & =\braket{\partial_{3}\Phi_{\bm{s}}\big|Z_{0}^{0}}\big|_{\bm{s}=0}\nonumber \\
 & -\braket{\partial_{3}\Phi_{\bm{s}}\big|\Phi_{\bm{s}}}\big|_{\bm{s}=0}\braket{\Phi_{\bm{s}}\big|Z_{0}^{0}}\big|_{\bm{s}=0}\nonumber \\
 & =0.
\end{align}

\subsection{\label{subsec:sperp0}The case $\bm{s}_{\perp}=0$}

\subsubsection{Proof of Corollary 2 in the main text for the case of $\bm{s}_{\perp}=0$}

\begin{proof}We assume $\bm{s}_{\perp}=0$ and $s_{3}\neq0$, where
the rank of the state is strictly two. In this case, according to
Eqs.~(\ref{eq:vs}, \ref{eq:ws}), the explicit forms of $v_{\bm{s}}$
and $w_{\bm{s}}$ can be expressed as
\begin{equation}
v_{\bm{s}}\big|_{\bm{s}_{\perp}=0}=\frac{\pi\mathcal{A}^{2}}{ks_{3}}\sin(ks_{3}a^{2}),
\end{equation}
\begin{equation}
w_{\bm{s}}\big|_{\bm{s}_{\perp}=0}=-\frac{\pi\mathcal{A}^{2}}{ks_{3}}[1-\cos(ks_{3}a^{2})],
\end{equation}
\begin{equation}
\partial_{i}v_{\bm{s}}\big|_{\bm{s}_{\perp}=0}=\partial_{i}w_{\bm{s}}\big|_{\bm{s}_{\perp}=0}=0,\,i=1,\,2.
\end{equation}
According to Eqs. (\ref{eq:tan2theta}, \ref{eq:delta}), we obtain
\begin{equation}
\theta_{\bm{s}}\big|_{\bm{s}_{\perp}=0}=\frac{ks_{3}a^{2}}{4},\label{eq:theta-sperp0}
\end{equation}
\begin{equation}
\Delta_{\bm{s}}\big|_{\bm{s}_{\perp}=0}=\left[\frac{2}{ks_{3}a^{2}}\sin\left(\frac{ks_{3}a^{2}}{2}\right)\right]^{2},
\end{equation}

\begin{equation}
\frac{2\partial_{i}\theta_{\bm{s}}}{1+4\theta_{\bm{s}}^{2}}\Bigg|_{\bm{s}_{\perp}=0}=-\frac{\partial_{i}w_{\bm{s}}v_{\bm{s}}-w_{\bm{s}}\partial_{i}v_{\bm{s}}}{v_{\bm{s}}^{2}}\Bigg|_{\bm{s}_{\perp}=0},
\end{equation}
\begin{equation}
\partial_{i}\Delta{}_{\bm{s}}\big|_{\bm{s}_{\perp}=0}=\frac{v_{\bm{s}}\partial_{i}v_{\bm{s}}+w_{\bm{s}}\partial_{i}w_{\bm{s}}}{\Delta_{\bm{s}}}\Bigg|_{\bm{s}_{\perp}=0}.
\end{equation}
Therefore, we arrive at 
\begin{equation}
\partial_{i}\theta_{\bm{s}}\big|_{\bm{s}_{\perp}=0}=\partial_{i}\Delta{}_{\bm{s}}\big|_{\bm{s}_{\perp}=0}=0,\,i=1,\,2.\label{eq:ditheta-perpzero}
\end{equation}
According to Eqs. (\ref{eq:psi1r}, \ref{eq:psi2r}), we know
\begin{equation}
\ket{\partial_{i}\psi_{n\bm{s}}}=\frac{\ket{\partial_{i}\tilde{\psi}_{n\bm{s}}}}{\sqrt{4p_{n\bm{s}}}},\,n,\,i=1,\,2.
\end{equation}
Substituting Eqs.~(\ref{eq:diPsi1Psi2}, \ref{eq:diPsi2Psi1}, \ref{eq:ditheta-perpzero})
into Eqs.~(\ref{eq:Lemm1-cond1}, \ref{eq:Lemma1-cond2}), one obtains
the saturation condition for regular projectors. One can direct apply
Theorem 4 in the main text to obtain the saturation condition for
a null projector.

Furthermore, if near the critical point $\bm{s}=0$ the QFIM is saturated,
then by taking the limit $\bm{s}\to0$, it is also saturated at $\bm{s}=0$.\end{proof}

\subsubsection{Details of constructing the optimal measurement in the main text}

It can be calculated that according to Eqs. (\ref{eq:psiP-expression},
\ref{eq:psiM-expression}, \ref{eq:theta-sperp0}, \ref{eq:ditheta-perpzero}),
\begin{equation}
\tilde{\psi}_{1\bm{s}}(\bm{r})\big|_{\bm{s}_{\perp}=0}=2\mathcal{A}\text{circ}(r/a)\cos[ks_{3}(a^{2}-2r^{2})/4],
\end{equation}
\begin{equation}
\tilde{\psi}_{2\bm{s}}(\bm{r})\big|_{\bm{s}_{\perp}=0}=2\mathcal{A}\text{circ}(r/a)\sin[ks_{3}(a^{2}-2r^{2})/4],
\end{equation}
\begin{equation}
\partial_{i}\tilde{\psi}_{1\bm{s}}(\bm{r})\big|_{\bm{s}_{\perp}=0}=-kx_{i}\tilde{\psi}_{2\bm{s}}(\bm{r})\big|_{\bm{s}_{\perp}=0}\,i=1,\,2,
\end{equation}
\begin{equation}
\partial_{i}\tilde{\psi}_{2\bm{s}}(\bm{r})\big|_{\bm{s}_{\perp}=0}=kx_{i}\tilde{\psi}_{1\bm{s}}(\bm{r})\big|_{\bm{s}_{\perp}=0},\,i=1,\,2.
\end{equation}
We see that both $\tilde{\psi}_{1\bm{s}}(\bm{r})\big|_{\bm{s}_{\perp}=0}$
and $\tilde{\psi}_{2\bm{s}}(\bm{r})\big|_{\bm{s}_{\perp}=0}$ are
even while both $\partial_{i}\tilde{\psi}_{1\bm{s}}(\bm{r})\big|_{\bm{s}_{\perp}=0}$
and $\partial_{i}\tilde{\psi}_{2\bm{s}}(\bm{r})\big|_{\bm{s}_{\perp}=0}$
for $i=1,\,2$ are odd. Therefore, 
\begin{equation}
\braket{\tilde{\psi}_{1\bm{s}}\big|Z_{2n+1}^{2m+1}}\big|_{\bm{s}_{\perp}=0}=\braket{\tilde{\psi}_{2\bm{s}}\big|Z_{2n+1}^{2m+1}}\big|_{\bm{s}_{\perp}=0}=0,
\end{equation}
\begin{equation}
\braket{\partial_{i}\tilde{\psi}_{1\bm{s}}\big|Z_{2n}^{2m}}\big|_{\bm{s}_{\perp}=0}=\braket{\partial_{i}\tilde{\psi}_{2\bm{s}}\big|Z_{2n}^{2m}}\big|_{\bm{s}_{\perp}=0}=0,
\end{equation}
where $Z_{2n+1}^{2m+1}(\bm{r})$ is of odd parity and $Z_{2n}^{2m}(\bm{r})$
is of even parity. Furthermore, since
\begin{equation}
\partial_{1}\tilde{\psi}_{k\bm{s}}(\bm{r})\big|_{\bm{s}_{\perp}=0}\propto f(r)\cos\phi,\,k=1,\,2,
\end{equation}
\begin{equation}
\partial_{2}\tilde{\psi}_{k\bm{s}}(\bm{r})\big|_{\bm{s}_{\perp}=0}\propto f(r)\sin\phi,\,k=1,\,2,
\end{equation}
we obtain for $m\neq\pm1$
\begin{equation}
\braket{\partial_{1}\tilde{\psi}_{k\bm{s}}\big|Z_{2n+1}^{m}}\big|_{\bm{s}_{\perp}=0}=0,\,k=1,\,2,
\end{equation}
while 
\begin{equation}
\braket{\partial_{1}\tilde{\psi}_{k\bm{s}}\big|Z_{2n+1}^{\pm1}}\big|_{\bm{s}_{\perp}=0}\neq0,\,k=1,\,2.
\end{equation}

\subsection{\label{subsec:s30}The case $s_{3}=0$}

\subsubsection{The saturation conditions for the case of $s_{3}=0$}

Let us focus on the case where $s_{3}=0$ and $\bm{s}_{\perp}\neq0$.
According to Eqs.~(\ref{eq:vs}, \ref{eq:ws}), it is easily calculated
that
\begin{equation}
v_{\bm{s}}\big|_{s_{3}=0}=\mathcal{A}^{2}\int d\bm{r}\text{circ}(r/a)\cos(2k\bm{s}_{\perp}\cdot\bm{r}),
\end{equation}
\begin{equation}
w_{\bm{s}}\big|_{s_{3}=0}=0,
\end{equation}
and for $i=1,\,2$
\begin{equation}
\partial_{i}v_{\bm{s}}\big|_{s_{3}=0}=-2k\mathcal{A}^{2}\int d\bm{r}x_{i}\text{circ}(r/a)\sin(2k\bm{s}_{\perp}\cdot\bm{r}),
\end{equation}
\begin{equation}
\partial_{i}w_{\bm{s}}\big|_{s_{3}=0}=0.
\end{equation}
According to Eqs.~(\ref{eq:tan2theta}, \ref{eq:delta}), we obtain
\begin{equation}
\theta_{\bm{s}}\big|_{s_{3}=0}=0,\label{eq:thetas3zero}
\end{equation}
and for $i=1,\,2$

\begin{equation}
\partial_{i}\theta_{\bm{s}}\big|_{s_{3}=0}=0.\label{eq:dithetas3zero}
\end{equation}
Substituting Eq.~(\ref{eq:diPsi1Psi2}, \ref{eq:diPsi2Psi1}, \ref{eq:dithetas3zero})
into Eqs.~(\ref{eq:Lemm1-cond1}, \ref{eq:Lemma1-cond2}), one obtains
the following saturation condition for a regular projector $\Pi_{k}=\sum_{\alpha}\ket{\pi_{k\alpha}}\bra{\pi_{k\alpha}}$
where $\text{Tr}(\rho_{\bm{\lambda}}\Pi_{k})>0$ and the estimation
of the transverse separation $s_{1}$ and $s_{2}$,
\begin{equation}
\frac{\partial_{i}p_{1\bm{s}}}{p_{1\bm{s}}}\braket{\psi_{1\bm{s}}\big|\pi_{k\alpha}}+2\braket{\partial_{i}\psi_{1\bm{s}}\big|\pi_{k\alpha}}=\xi_{i}^{k}\braket{\psi_{1\bm{s}}\big|\pi_{k\alpha}},\label{eq:regular-cond1-s3zero}
\end{equation}
\begin{equation}
\frac{\partial_{_{i}}p_{2\bm{s}}}{p_{2\bm{s}}}\braket{\psi_{2\bm{s}}\big|\pi_{k\alpha}}+2\braket{\partial_{i}\psi_{2\bm{s}}\big|\pi_{k\alpha}}=\xi_{i}^{k}\braket{\psi_{2\bm{s}}\big|\pi_{k\alpha}}.\label{eq:regular-cond2-s3zero}
\end{equation}

One can direct apply Theorem 4 in the main text to obtain the following
saturation condition for a null projector $\Pi_{k}=\sum_{\alpha}\ket{\pi_{k\alpha}}\bra{\pi_{k\alpha}}$
where $\text{Tr}(\rho_{\bm{\lambda}}\Pi_{k})=0$,
\begin{equation}
\braket{\partial_{i}\tilde{\psi}_{n\bm{s}}\big|\pi_{k\alpha}}=\eta_{ij}\braket{\partial_{j}\tilde{\psi}_{n\bm{s}}\big|\pi_{k\alpha}},\,i,\,j=1,\,2,\forall n,\alpha.\label{eq:null-s3zero}
\end{equation}
Note that if near the critical point $\bm{s}=0$ the QFIM is saturated
by some optimal measurement, then by taking the limit $\bm{s}\to0$,
it is also saturated at $\bm{s}=0$.$\square$

\subsubsection{Details of constructing the optimal measurement in the main text}

It can be calculated according to Eqs.~(\ref{eq:psiP-expression},
\ref{eq:psiM-expression}, \ref{eq:thetas3zero}, \ref{eq:dithetas3zero})
that,
\begin{equation}
\tilde{\psi}_{1\bm{s}}(\bm{r})\big|_{s_{3}=0}=2\mathcal{A}\text{circ}(r/a)\cos(k\bm{s}_{\perp}\cdot\bm{r}),
\end{equation}
\begin{equation}
\tilde{\psi}_{2\bm{s}}(\bm{r})\big|_{s_{3}=0}=2\mathcal{A}\text{circ}(r/a)\sin(k\bm{s}_{\perp}\cdot\bm{r}),
\end{equation}
\begin{equation}
\partial_{i}\tilde{\psi}_{1\bm{s}}(\bm{r})\big|_{s_{3}=0}=-kx_{i}\tilde{\psi}_{2\bm{s}}(\bm{r})\big|_{s_{3}=0},\,i=1,\,2,
\end{equation}
\begin{equation}
\partial_{i}\tilde{\psi}_{2\bm{s}}(\bm{r})\big|_{s_{3}=0}=kx_{i}\tilde{\psi}_{1\bm{s}}(\bm{r})\big|_{s_{3}=0},\,i=1,\,2.
\end{equation}
We see that for $i=1,\,2$, both $\tilde{\psi}_{1\bm{s}}(\bm{r})\big|_{s_{3}=0}$
and $\partial_{i}\tilde{\psi}_{1\bm{s}}(\bm{r})\big|_{s_{3}=0}$ are
even while both $\tilde{\psi}_{2\bm{s}}(\bm{r})\big|_{s_{3}=0}$ and
$\partial_{i}\tilde{\psi}_{2\bm{s}}(\bm{r})\big|_{s_{3}=0}$ are odd.
We choose real basis with definite parity where the real even and
basis functions are denoted as $\pi_{\pm\alpha}(\bm{r})=\braket{\bm{r}\big|\pi_{\pm\alpha}}$
respectively. For an even and regular basis vector $\ket{\pi_{+\alpha}}$,
we can obtain $\braket{\psi_{2\bm{s}}\big|\pi_{k\alpha}}=0$ and $\braket{\partial_{i}\psi_{2\bm{s}}\big|\pi_{k\alpha}}=0$
by the parities of these functions. Thus both sides of Eq. (\ref{eq:regular-cond2-s3zero})
vanish and set no constraint on the constant $\xi_{i}^{k}$. From
Eq. (\ref{eq:regular-cond1-s3zero}) we find 
\begin{equation}
\xi_{i}^{+\alpha}=\left(\frac{\braket{\partial_{i}\psi_{1\bm{s}}\big|\pi_{+\alpha}}}{\braket{\psi_{1\bm{s}}\big|\pi_{+\alpha}}}+\frac{\partial_{i}p_{1\bm{s}}}{p_{1\bm{s}}}\right)\Bigg|_{s_{3}=0}
\end{equation}
is also real. For different regular even basis vectors, the coefficients
$\xi_{i}^{+\alpha}$ are not necessarily equal. Thus according to
the recipe in the main text, we obtain one regular projector $\Pi_{+\alpha}=\ket{\pi_{+\alpha}}\bra{\pi_{+\alpha}}$
corresponding to each of these vectors for the optimal measurement.
If an even basis vector $\ket{\pi_{+\alpha}}$ is null, then we see
that $\braket{\partial_{1}\tilde{\psi}_{2\bm{s}}\big|\pi_{+\alpha}}=\braket{\partial_{2}\tilde{\psi}_{2\bm{s}}\big|\pi_{+\alpha}}=0$
and
\begin{equation}
\eta_{21}^{+\alpha}=\frac{\braket{\partial_{2}\tilde{\psi}_{1\bm{s}}\big|\pi_{+\alpha}}}{\braket{\partial_{1}\tilde{\psi}_{1\bm{s}}\big|\pi_{+\alpha}}}\Bigg|_{s_{3}=0}
\end{equation}
is real. Again for different null even basis vectors, the coefficients
$\eta_{21}^{+\alpha}$ are not necessarily equal. We obtain one null
projector $\Pi_{+\alpha}=\ket{\pi_{+\alpha}}\bra{\pi_{+\alpha}}$
for each of these vectors for the optimal measurement. Similar analysis
can be done for odd basis functions, either regular or null. Therefore
one can construct the optimal projectors $\Pi_{-\alpha}=\ket{\pi_{-\alpha}}\bra{\pi_{-\alpha}}$.
So we conclude that rank one projectors formed by real and parity
definite basis vectors are optimal on the plane $s_{3}=0$.

\bibliographystyle{apsrev4-1}
\bibliography{Metrology}

\end{document}